\documentclass[a4paper,11pt,english]{amsart}
\usepackage{microtype}
\usepackage{amsthm}
\usepackage[foot]{amsaddr}
\usepackage{graphicx} \usepackage{array} \usepackage{colortbl,makecell}
\usepackage{pifont} 

\usepackage{amsmath, amssymb, amsfonts, verbatim}
\usepackage{hyphenat, epsfig, subcaption, multirow}
\usepackage{nicefrac}
\usepackage{paralist}
\usepackage[shortlabels]{enumitem}

\usepackage{dsfont} 

\usepackage{mathtools, etoolbox}
\usepackage{thmtools}
\usepackage{thm-restate}
\usepackage{xparse}

\usepackage{wrapfig}

\usepackage[noend]{algorithmic}
\usepackage{algorithm}

\usepackage[font=small,labelfont=bf]{caption}

\usepackage{booktabs,tabularx,multirow}

\usepackage[dvipsnames]{xcolor}

\usepackage{babel}
\usepackage{csquotes}

\usepackage[style=alphabetic,backref=true,maxnames=99,maxalphanames=99,
            isbn=false,doi=true]{biblatex}

\AtEveryBibitem{\clearname{editor}}
\AtEveryBibitem{\clearlist{publisher}}
\AtEveryBibitem{\clearfield{pages}}
\DeclareFontFamily{U}{mathx}{\hyphenchar\font45}
\DeclareFontShape{U}{mathx}{m}{n}{
      <5> <6> <7> <8> <9> <10>
      <10.95> <12> <14.4> <17.28> <20.74> <24.88>
      mathx10
      }{}
\DeclareSymbolFont{mathx}{U}{mathx}{m}{n}
\DeclareMathSymbol{\bigtimes}{1}{mathx}{"91}

\usepackage{tcolorbox}
\tcbuselibrary{skins,breakable}
\tcbset{enhanced jigsaw}

\definecolor{DarkRed}{rgb}{0.5,0.1,0.1}
\definecolor{DarkBlue}{rgb}{0.1,0.1,0.5}

\usepackage{nameref}
\definecolor{ForestGreen}{rgb}{0.1333,0.5451,0.1333}
\definecolor{Red}{rgb}{0.9,0,0}
\usepackage[linktocpage=true,
	colorlinks,
	linkcolor=DarkRed,citecolor=ForestGreen,
    bookmarks,bookmarksopen,bookmarksnumbered]
	{hyperref}
\usepackage[noabbrev,nameinlink,capitalize]{cleveref}
\crefname{property}{property}{Property}
\creflabelformat{property}{(#1)#2#3}
\crefname{equation}{Eq}{Eq}
\creflabelformat{equation}{(#1)#2#3}

\usepackage{bm}
\usepackage{url}
\usepackage{xspace}
\usepackage[mathscr]{euscript}
\usepackage{mathrsfs}

\usepackage{tikz}
\usetikzlibrary{arrows}
\usetikzlibrary{arrows.meta}
\usetikzlibrary{shapes}
\usetikzlibrary{backgrounds}
\usetikzlibrary{positioning}
\usetikzlibrary{decorations.markings}
\usetikzlibrary{patterns}
\usetikzlibrary{calc}
\usetikzlibrary{fit}
\tikzset{vertex/.style={circle, black, fill=Yellow, line width=1pt, draw, minimum width=8pt, minimum height=8pt, inner sep=0pt}}

\usepackage[framemethod=TikZ]{mdframed}

\usepackage[margin=1in]{geometry}

\usepackage{soul}

\renewcommand{\paragraph}[1]{\medskip\noindent\textbf{#1}}

\newtheorem{theorem}{Theorem}
\newtheorem{lemma}{Lemma}[section]
\newtheorem{proposition}[lemma]{Proposition}

\newtheorem*{claim*}{Claim}
\newtheorem*{proposition*}{Proposition}
\newtheorem*{lemma*}{Lemma}
\newtheorem*{problem*}{Problem}

\crefname{lemma}{Lemma}{Lemmas}
\crefname{claim}{Claim}{Claims}
\crefname{enumi}{Step}{Steps}
\crefname{step}{Step}{Steps}

\theoremstyle{definition}
\newtheorem{definition}[lemma]{Definition}

\theoremstyle{definition}

\newenvironment{abox}{\begin{tcolorbox}[
		enlarge top by=5pt,
		enlarge bottom by=5pt,
		breakable,
		frame hidden,
		overlay broken = {
			\draw[line width=1pt, black]
			(frame.north west) rectangle (frame.south east);},
		overlay = {
			\draw[line width=1pt, black]
			(frame.north west) rectangle (frame.south east);},
		boxsep=0pt,
		left=4pt,
		right=4pt,
		top=10pt,
		arc=0pt,
		boxrule=1pt,toprule=1pt,
		colback=white
	]}
{\end{tcolorbox}}

\newtheorem{mdalg}{Algorithm}
\newenvironment{Algorithm}{\begin{abox}\begin{mdalg}}{\end{mdalg}\end{abox}}

\renewcommand{\qed}{\nobreak \ifvmode \relax \else
      \ifdim\lastskip<1.5em \hskip-\lastskip
      \hskip1.5em plus0em minus0.5em \fi \nobreak
      \vrule height0.75em width0.5em depth0.25em\fi}

\renewcommand{\leq}{\leqslant}
\renewcommand{\geq}{\geqslant}

\DeclarePairedDelimiter{\bracket}[]
\DeclarePairedDelimiter{\paren}()

\DeclarePairedDelimiter{\floor}{\lfloor}{\rfloor}

\DeclarePairedDelimiter{\set}{\{}{\}}

\DeclarePairedDelimiterXPP{\Ot}[1]{\widetilde{O}}(){}{#1}
\DeclarePairedDelimiterXPP{\Omgt}[1]{\widetilde{\Omega}}(){}{#1}
\DeclarePairedDelimiterXPP{\BigO}[1]{O}(){}{#1}

\DeclareMathOperator{\Trace}{Tr}
\DeclarePairedDelimiterXPP{\tr}[1]{\Trace}(){}{#1}

\NewDocumentCommand{\Prob}{sO{}E{_}{{}}m}{{\boldsymbol{\mathbb{P}}}_{#3}
  \IfBooleanTF{#1}
  {\bracket*{#4}}
  {\bracket[#2]{#4}}
}

\NewDocumentCommand{\Exp}{sO{}E{_}{{}}m}{\expect_{#3}
  \IfBooleanTF{#1}
  {\bracket*{#4}}
  {\bracket[#2]{#4}}
}

\renewcommand{\epsilon}{\varepsilon}

\newcommand{\poly}{\mbox{\rm poly}}

\newenvironment{tbox}{\begin{tcolorbox}[
		enlarge top by=5pt,
		enlarge bottom by=5pt,
		 breakable,
		 boxsep=0pt,
                  left=4pt,
                  right=4pt,
                  top=10pt,
                  arc=0pt,
                  boxrule=1pt,toprule=1pt,
                  colback=white
                  ]}
{\end{tcolorbox}}

\newcommand{\qLOCAL}{quantum-LOCAL\xspace}
\newcommand{\qPN}{quantum-PN\xspace}

\newcommand{\bN}{\mathbb{N}}

\newcommand{\bR}{\mathbb{R}}

\newcommand{\ket}[1]{|#1\rangle}
\newcommand{\bra}[1]{\langle#1|}
\newcommand{\braket}[2]{\langle #1|#2\rangle}
\newcommand{\ketbra}[2]{|#1\rangle\langle#2|}
\newcommand{\braUket}[3]{\langle #1|#2|#3 \rangle}

\newcommand{\bC}{\mathbb{C}}
\DeclareMathOperator{\Mat}{Mat}
\DeclareMathOperator{\im}{im}
\DeclareMathOperator{\End}{End}
\newcommand{\rL}{\mathsf{L}}
\newcommand{\rM}{\mathsf{M}}
\newcommand{\rR}{\mathsf{R}}
\newcommand{\rX}{\mathsf{X}}
\newcommand{\rY}{\mathsf{Y}}
\newcommand{\rZ}{\mathsf{Z}}
\newcommand{\rT}{\mathsf{T}}
\newcommand{\rE}{\mathsf{E}}

\newcommand{\rSigma}{\mathsf{\Sigma}}

\usepackage{dsfont}
\newcommand{\Id}{\mathds{1}}
\newcommand{\One}{\mathds{1}}

\newcommand{\algA}{\mathscr{A}}
\newcommand{\algB}{\mathscr{B}}

\title{Distributed Quantum Algorithms Cannot Color Cycles with Probability 1}
\date{}

\author{Xavier Coiteux-Roy\textsuperscript{1}}
\author{Maxime Flin\textsuperscript{2}}
\author{Carlos de Gois\textsuperscript{3}}
\author{Marc-Olivier Renou\textsuperscript{3}}
\author{Jukka Suomela\textsuperscript{2}}
\author{Isadora Veeren\textsuperscript{3}}

\address{\textsuperscript{1}\,Institute for Quantum Science and Technology, University of Calgary, Canada}
\address{\textsuperscript{2}\,Aalto University, Finland}
\address{\textsuperscript{3}\,Inria Paris-Saclay, CPHT, CNRS, Ecole Polytechnique, Institut Polytechnique de Paris, France}

\begin{document}

\begin{abstract}
    We prove that any \emph{distributed quantum algorithm} that finds a $3$-coloring with probability $1$ in a cycle of anonymous identical computers has to be global, that is, it needs $\Omega(n)$ communication rounds. It follows that quantum computation and communication does not help with this problem.

    All prior lower bounds on quantum advantage in distributed graph algorithms use arguments related to physical causality. However, it is known that such arguments cannot rule out fast quantum advantage for $3$-coloring cycles. In particular, any causality-based argument would rule out the existence of \emph{finitely dependent coloring}, but Holroyd and Liggett (2016) showed that such colorings do exist.
    
    Hence to tackle this problem, we need a ``genuinely quantum'' lower-bound technique that can distinguish between (1)~distributions that do not violate physical causality vs.\ (2)~distributions that can be realized with a quantum strategy. We present the first such lower-bound technique in this context. First, we show that $1$-round quantum algorithms cannot break symmetry with probability $1$. Second, we present a \emph{wishful teleportation} strategy that can be used to turn $T$-round quantum 3-coloring algorithms into $1$-round quantum algorithms breaking symmetry, while preserving success probability $1$. Put together, the lower bound follows.
\end{abstract}

\maketitle

\section{Introduction}
\label{sec:intro}

In this work, we present the first ``genuinely quantum'' lower bound that puts limits on quantum advantage in distributed symmetry breaking and coloring of cycle. In particular, we show that no distributed quantum algorithm can color cycles with probability $1$ in a sublinear number of rounds.
Previous lower bounds were established only for \emph{finitely dependent distributions}, a computational model that contains distributed quantum algorithms while remaining easier to analyse. However, finitely dependent colorings of cycles are known to exist \cite{holroyd-liggett-2016-finitely-dependent-coloring}.
Hence, our work is the first inherently quantum lower bound that separates distributed quantum algorithms from finitely dependent distributions.

\subsection{Quantum Advantage for Distributed Symmetry Breaking}

One of the biggest open questions related to distributed quantum advantage is whether quantum computation and communication help with distributed symmetry-breaking problems \cite{gavoille-kosowski-markiewicz-2009-what-can-be-observed,akbari-coiteux-roy-etal-2025-online-locality-meets,suomela-2024-open-problems-related-to-locality-in,d-amore-2025-on-the-limits-of-distributed-quantum}. The canonical symmetry-breaking primitive is $3$-coloring directed paths and cycles; if this problem can be solved efficiently, all other symmetry-breaking tasks can also be solved efficiently.

The main formalism where this question has been studied is the LOCAL model of distributed computing \cite{linial-1992-locality-in-distributed-graph-algorithms,peleg-2000-distributed-computing-a-locality-sensitive}: we have a cycle formed by $n$ computers; they are initially labeled with $\poly(n)$-sized unique identifiers; computation proceeds in synchronous rounds; in each round, each node can send a message to each neighbor and update its state based on the messages it receives; and eventually, after some $T$ rounds, each node must stop and announce its own color. In this classical deterministic setting, the complexity of the task is very well-understood since the 1990s: $\Theta(\log^* n)$ communication rounds is sufficient \cite{cole-vishkin-1986-deterministic-coin-tossing-with,goldberg-plotkin-shannon-1988-parallel-symmetry} and necessary \cite{linial-1992-locality-in-distributed-graph-algorithms}. Even if we allow randomness and only require a correct solution with high probability (w.h.p., with probability $1-1/\poly(n)$), it is known that $\Theta(\log^* n)$ rounds are still necessary \cite{naor-1991-a-lower-bound-on-probabilistic-algorithms-for}.

But does anything change if we switch from classical LOCAL to the quantum-LOCAL model \cite{gavoille-kosowski-markiewicz-2009-what-can-be-observed,arfaoui-fraigniaud-2014-what-can-be-computed-without}, where each node is a quantum computer and can exchange qubits with its neighbors? While we have seen a lot of progress in understanding which distributed problems admit quantum advantage and which problems remain hard in quantum-LOCAL \cite{gavoille-kosowski-markiewicz-2009-what-can-be-observed,arfaoui-fraigniaud-2014-what-can-be-computed-without,le-gall-nishimura-rosmanis-2019-quantum-advantage-for,dhar-kujawa-etal-2024-local-problems-in-trees-across-a,akbari-coiteux-roy-etal-2025-online-locality-meets,balliu-brandt-etal-2025-distributed-quantum-advantage,balliu-casagrande-etal-2026-distributed-quantum,balliu-coupette-etal-2025-new-limits-on-distributed,brandt-gottlicher-2026-a-post-quantum-lower-bound-for}, the seemingly elementary question of $3$-coloring cycles has remained open, even though essentially the same question was already posed in the very first paper that introduced the quantum-LOCAL model \cite{gavoille-kosowski-markiewicz-2009-what-can-be-observed}.
Nobody has been able to rule out the possibility that there is, for example, a quantum-LOCAL algorithm that $3$-colors cycles in a constant number $O(1)$ of rounds of communication.

\subsection{Barrier: Finitely Dependent Colorings}

There is a barrier that has so far prevented progress on this question. To our knowledge, all known lower bounds in the quantum-LOCAL model are \emph{indirect}: they do not directly study the quantum-LOCAL model (or any model of quantum computing), but they study  strictly stronger models,
such as finitely dependent distributions \cite{burton-goulet-meester-1993-on-1-dependent-processes-and,holroyd-liggett-2016-finitely-dependent-coloring,akbari-coiteux-roy-etal-2025-online-locality-meets}, non-signaling distributions \cite{gavoille-kosowski-markiewicz-2009-what-can-be-observed,arfaoui-fraigniaud-2014-what-can-be-computed-without}, and the online-LOCAL model \cite{akbari-eslami-etal-2023-locality-in-online-dynamic,akbari-coiteux-roy-etal-2025-online-locality-meets}. These lower bounds fundamentally build on the idea that distributed quantum algorithms cannot violate physical causality---information must be carried by communicated systems and correlated event must share a common cause.

Unfortunately, this avenue is blocked for the case of symmetry-breaking problems, and especially for the case of $3$-coloring cycles. It is known that there exists a finitely dependent distribution of $3$-colorings in cycles \cite{holroyd-liggett-2016-finitely-dependent-coloring}. It follows that a hypothetical constant-round quantum algorithm for $3$-coloring cycles would not violate causality, and hence purely causality-based arguments cannot be used.

Therefore, any proof that shows the impossibility of efficient $3$-coloring in quantum-LOCAL must be ``genuinely quantum'', in the sense that it cannot as a byproduct rule out finitely dependent distributions, non-signaling distributions, or online-LOCAL algorithms.

We point out that there is a paper that rules out {one-round, one-sided algorithms} for $3$-coloring \cite{le-gall-rosmanis-2022-non-trivial-lower-bound-for-3,gavoille-kachigar-zemor-2019-localisation-resistant} and, very recently, a paper proving that 3-coloring $\Delta$-ary trees with $\Delta \geq 300$ requires $\Omega(\log^* \Delta)$ rounds \cite{fraigniaud-magniez-ziccardi-2026-no-distributed-quantum}.
Both lower bounds are about the non-signaling model, i.e., causality-based---fundamentally different techniques are needed for stronger lower bounds.

\subsection{Our Contribution}

In this work, we present the \emph{first ``genuinely quantum'' lower bound for the task of distributed $3$-coloring}; our lower bound \emph{separates quantum algorithms from finitely dependent distributions}.
We prove that there is no $o(n)$-round quantum algorithm using finitely many qubits in anonymous cycles that produces a $3$-coloring with probability $1$; the precise statement is as follows:
\begin{theorem}
  \label{thm:coloring}
  For every fixed $c\geq 3$ and every even $n \geq 4$, any \qPN algorithm that properly
  $c$-colors directed $n$-cycles with probability $1$ takes $T(n) > \floor{n/2} - 2$ rounds.
\end{theorem}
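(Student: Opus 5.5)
We argue by contradiction, following the two-step strategy outlined in the abstract. Fix $c\geq 3$ and an even $n\geq 4$, and suppose $\algA$ is a \qPN algorithm running in $T\leq \floor{n/2}-2$ rounds that properly $c$-colors every directed $n$-cycle with probability $1$. Since all nodes are anonymous and identical, $\algA$ consists of one local initial state and one local unitary/measurement per round, applied identically at every node. The goal is to turn $\algA$ into a $1$-round \qPN algorithm $\algB$ that breaks symmetry with probability $1$ on some small cycle. Concretely, $\algB$ should output, with certainty, a labeling in which two adjacent nodes differ (or some other nontrivial symmetry-breaking predicate). We then invoke the first ingredient: $1$-round quantum algorithms cannot break symmetry with probability $1$.

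\textbf{Step 1: the $1$-round impossibility.} I would prove this first. In one round on a directed cycle, each node $v$ holds the local state $\rho$, applies a fixed isometry splitting it into registers $\rL_v,\rM_v,\rR_v$, sends $\rL_v$ left and $\rR_v$ right, and measures the received registers together with $\rM_v$ using a fixed POVM. The joint output distribution is then a ``matrix product'' expression $\tr{\prod_v E_{x_v}}$ around the cycle, with transfer operators $E_x$ summing to a completely positive map $E$.

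Probability-$1$ properness means that $\tr{E_{x_1}\cdots E_{x_m}}=0$ for every closed word containing a monochromatic adjacent pair. The key algebraic step is to show this forces $\tr{E_x^{\,m}}=0$ for each $x$. It also forces enough orthogonality that the total weight $\tr{E^m}=1$ cannot be carried by non-monochromatic words for all admissible cycle lengths simultaneously. I would use positivity, in the form of $\ket{\psi}$-purification of each node's state, to argue that vanishing of these traces propagates to vanishing of the relevant operator blocks, giving a contradiction.

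\textbf{Step 2: wishful teleportation.} Next I would simulate $T$ rounds of $\algA$ in one round. Each node pre-shares EPR pairs with its neighbours along the cycle; note that the initial state may be entangled in \qPN. Each node also performs teleportation measurements, and then simply \emph{assumes} every Pauli correction outcome is trivial. This means the $T$ layers of communication are compressed, with the neighbours' messages obtained by post-selection.

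Conditioned on all outcomes being trivial, which is an event of positive probability, the resulting state equals the true state of $\algA$. Probability-$1$ correctness of $\algA$ therefore implies that the post-selected outputs are proper colorings. Correctness must moreover hold for \emph{every} correction pattern. The reason is that a nontrivial pattern corresponds to running $\algA$ with Pauli-twisted channels, and these twists can be pushed into the local unitaries. Because the view radius $T$ is less than $\floor{n/2}-1$, no node sees the whole cycle, so the twisted execution is locally indistinguishable from a genuine execution of a (possibly different) anonymous algorithm. This is where the bound $T\le\floor{n/2}-2$ and evenness of $n$ enter: we need two antipodal nodes whose $T$-balls are disjoint, with room left for the one round of $\algB$. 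The resulting $\algB$ is a $1$-round algorithm that produces a symmetry-breaking output with probability $1$.

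\textbf{Main obstacle.} The hard part is making the probability-$1$ guarantee survive the ``wishful'' post-selection. Correctness is only guaranteed on the genuine execution, whereas teleportation produces a mixture over Pauli frames. Showing that every frame still yields a valid coloring requires that frame errors act like a relabeling of the unitaries that $\algA$ must also handle. My first attempt would be to absorb the Pauli operators into the node's local operations via the anonymity and uniformity of \qPN, and to use that probability $1$ is a support condition, hence preserved under any positive-weight change of the state within the support.
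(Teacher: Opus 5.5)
\textbf{Step 2 has a genuine gap.} The claim that correctness of $\algA$ ``must hold for every correction pattern'' has no basis. When a node skips the correction after a nontrivial outcome $k$, its neighbour applies $U_t$ to a message twisted by $P_k^\dagger$. Absorbing these twists into the local unitaries gives a different algorithm: non-uniform across nodes when keys differ, and some $\algA_k\neq\algA$ even when all keys coincide. The probability-$1$ guarantee of $\algA$ says nothing about such executions. Your support argument does not rescue this, because the twisted states are not in the support of $\algA$'s state. Being ``locally indistinguishable from a possibly different anonymous algorithm'' is exactly the problem, not the fix.

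The missing idea has two parts. First, each node should guess that the relevant neighbour obtained the \emph{same} key as itself, and apply the correction $P_{k(p)}$ given by its \emph{own} outcome, rather than guessing the trivial key. Second, each node must \emph{output} all its keys together with its simulated color. Then take any block of $2T+2$ consecutive nodes. Either two adjacent nodes in it have different keys, and the block is non-monochromatic because of the keys. Or all keys agree, and every guessed correction is the right one. In that case chain teleportation plus induction on intervals losing one node per side per simulated round shows that the two central nodes are simulated faithfully, so they receive different colors. You therefore should not expect $\algB$ to be a proper coloring: it only guarantees no monochromatic $(2T+2)$-block. Also, in this model the initial state is a product state, so the EPR pairs must be distributed in $\algB$'s single round rather than pre-shared.

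\textbf{Step 1 must therefore prove something different.} You need that every $1$-round \qPN algorithm on the $n$-cycle has a monochromatic $(n-2)$-block in its support. The bound then comes from $2T+2\le n-2$, not from antipodal balls being disjoint. Your sketch does not deliver even the coloring version. The identity $\tr{E_x^m}=0$ is immediate and yields nothing. An argument over ``all admissible cycle lengths simultaneously'' is unavailable, because the algorithm may depend on $n$.

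What is needed is a structural fact specific to one round:
\begin{itemize}
\item The post-round pure state has amplitudes $\bra{v_1,\ldots,v_n}C\ket{u,\ldots,u}=\tr{\prod_p L(v_p)}$, with $\im(L)=(V\otimes\Mat_d(\bC))P$ and $0<\dim V\le d$.
\item This gives a block form with a multiplicative, trace-preserving compression $\tau$ that is already surjective on $S^2$.
\item Positivity of $\Pi_a$ turns ``no monochromatic $(n-2)$-block'' into $\tr{L(\Pi_a^\dagger x)^{n-2}L(y)L(z)}=0$ for \emph{all} $y,z$.
\item The two free positions and surjectivity of $\tau$ on $S^2$ force $\tau(L(\Pi_a^\dagger x))$ to be nilpotent, hence $\tr{L(x)}=0$ for every $x$.
\item This contradicts $\tr{(A\otimes E_{j,i})P}=A_{i,j}\neq 0$ for a nonzero $A\in V$ and a suitable matrix unit $E_{j,i}$.
\end{itemize}
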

\noindent We note that our result comes with two assumptions:
\begin{enumerate}
    \item We work in the \qPN model; this is the \emph{anonymous} version of quantum-LOCAL (with finitely many qubits). In particular the nodes do not have unique identifiers.
    \item We only rule out algorithms that work with \emph{probability $1$}; our result does not rule out algorithms that work w.h.p.
\end{enumerate}
We argue that, while somewhat nonstandard, this setting is still relevant and nontrivial:
\begin{enumerate}
    \item While probability-$1$ classical algorithms do not usually make much sense, there are several examples of nontrivial distributed quantum algorithms that indeed guarantee success with probability $1$ using only finitely many qubits; examples include the quantum algorithm for distributed symmetry-breaking in two-node networks \cite{gidney-2014-perfect-symmetry-breaking-with-quantum} and the solution strategy for the GHZ game \cite{greenberger-horne-zeilinger-1989-going-beyond-bell-s}.
    \item The finitely dependent colorings \cite{holroyd-liggett-2016-finitely-dependent-coloring} are also error-free, and hence a direct distributed analogue would be a quantum algorithm with success probability $1$. Moreover, they are invariant under shifts, which would naturally correspond to a quantum algorithm in the anonymous setting.
\end{enumerate}
\cref{tab:results} gives an overview of what is now known about the $3$-coloring problem across various models of computing.

\begin{table}
    \definecolor{newcolor}{HTML}{f26924}
    \definecolor{opencolor}{HTML}{0088cc}
    \centering
    \begin{tabular}{llll}
        \toprule
        Model & With probability 1 & With high probability \\
        \midrule
        deterministic PN & impossible & impossible \\
        randomized PN & $\Theta(n)$ & $\Theta(\log^* n)$ \\
        quantum-PN & \textcolor{newcolor}{$\Theta(n)$ -- \textbf{\em new}} & \textcolor{opencolor}{$O(\log^* n)$} \\
        \midrule
        deterministic LOCAL & $\Theta(\log^* n)$ & $\Theta(\log^* n)$ \\
        randomized LOCAL & $\Theta(\log^* n)$ & $\Theta(\log^* n)$ \\
        quantum-LOCAL & \textcolor{opencolor}{$O(\log^* n)$} & \textcolor{opencolor}{$O(\log^* n)$} \\
        \midrule
        finitely dependent & $\Theta(1)$ & $\Theta(1)$ \\
        non-signaling & $\Theta(1)$ & $\Theta(1)$ \\
        SLOCAL & $\Theta(1)$ & $\Theta(1)$ \\
        online-LOCAL & $\Theta(1)$ & $\Theta(1)$ \\
        \bottomrule
    \end{tabular}
    \caption{What is now known about the problem of $3$-coloring cycles across various models of computing. 
The values are round complexities or localities; the nontrivial results are from \cite{cole-vishkin-1986-deterministic-coin-tossing-with,linial-1992-locality-in-distributed-graph-algorithms,naor-1991-a-lower-bound-on-probabilistic-algorithms-for,holroyd-liggett-2016-finitely-dependent-coloring}. We have highlighted the \textcolor{newcolor}{new result} and the \textcolor{opencolor}{open questions} with colors.}
        \label{tab:results}
\end{table}

\subsection{Open Questions}

After this work, the main open question is extending our result from algorithms that work with probability $1$ to algorithms that work w.h.p. We believe that this is the only essential missing part before we can close the long-standing open question. As soon as we look at algorithms that work w.h.p., the distinction between anonymous networks and networks with unique identifiers essentially disappears (since random identifiers will be unique w.h.p.).

\section{Technical Overview}
\label{sec:technical-overview}

The proof of \cref{thm:coloring} is in two steps: first, we prove that in any 1-round \qPN algorithm, there is a non-zero probability that $n-2$ consecutive nodes output the same value; then, we prove that any zero-error $T$-round coloring algorithm can be turned into a 1-round algorithm (using more colors) that is breaking symmetry with probability one, meaning that no $2T + 2$ consecutive nodes all output the same value. 
Hence, any zero-error coloring algorithm must use $T > n/2-1$ rounds, as if not, it could be used to produce a 1-round \qPN algorithm in which any $2T + 2 \leq n-2$ consecutive nodes never all output the same value.

This work was assisted by modern AI tools: the key idea of the first part, the one-round lower bound, was discovered by OpenAI GPT 5.4, while the key idea of part (2) was discovered by us. However, we emphasize that this paper was fully written (and verified) by humans, for humans.

\paragraph{The \qPN model (\cref{sec:background}).}
Since nodes are in a cycle, we identify them using integers from 1 to n understood cyclically modulo $n$. In the \qPN model, communications rounds are seen as swap gates in a quantum circuit, see \cref{fig:example-quantum-LOCAL}.
Each node manipulates three local registers, local computation steps correspond to unitary matrices, and that the output colors are obtained by a final projective measurement. 
Using a standard Stinespring dilation, by increasing the dimension of the local registers, one can verify that this model captures the more general definition of \qPN in which unitary matrices can be any local quantum transformation (i.e., any CPTP map).

\begin{figure}[ht!]
    \centering
    \includegraphics[width=.5\linewidth]{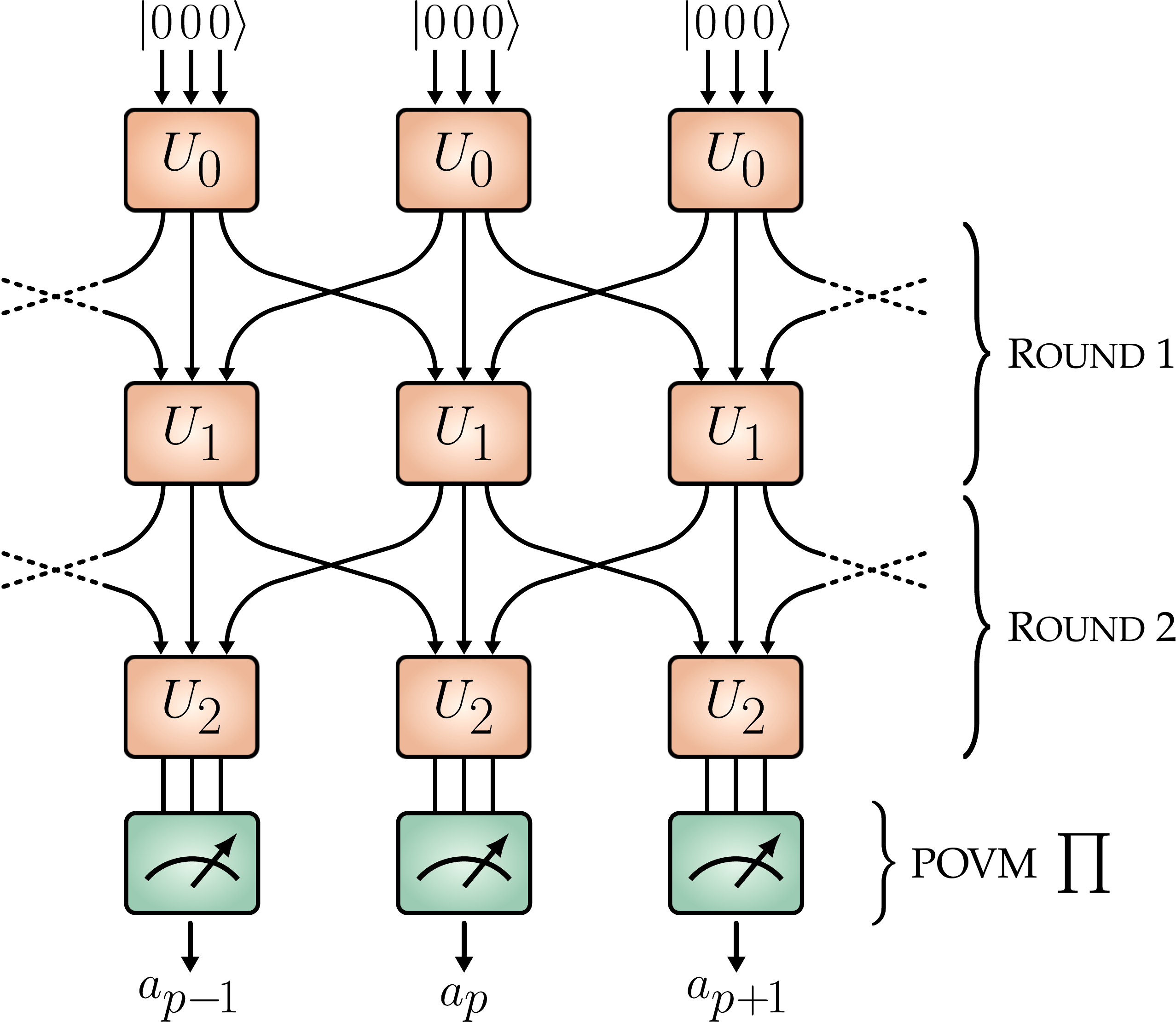}
    \caption{The normal form of a 2-round \qPN algorithm. It interleaves local computations, in the form of unitary matrices $U_0$, $U_1$, and $U_2$ and register swaps with the left and right neighbors. The final outcome is the result of a measurement (formalized as a POVM). See \cref{sec:background} for formal definitions.\label{fig:example-quantum-LOCAL}}
\end{figure}

\paragraph{The One-Round Lower Bound (\cref{sec:one-round}).}
The proof proceeds by contradiction. Assume that there exists a one-round, identical nodes quantum algorithm which breaks symmetry, i.e., such that the probability that any $n-2$ consecutive nodes output the same color is zero. First, we express the global state after the communication round as a translation-invariant Matrix Product State (MPS) \cite{fannes-nachtergaele-werner-1992-finitely-correlated,perez-garcia-verstraete-etal-2007-matrix-product-state,cirac-perez-garcia-etal-2021-matrix-product-states-and}.
Two structural properties of this MPS are then derived.
One follows from the symmetry-breaking requirement, while the other follows from the fact that the algorithm uses a single communication round.
Finally, we show that these two properties are incompatible.

For the sake of intuition, we consider here a simplified case where the number of nodes $n$ is even, the number of colors is equal to the dimension of each local quantum state and each node produces its output by a measurement in the computation basis.
Using MPS, the global state $\ket{\psi}$ of the network after the communication round can be written as
\[
\ket{\psi}
=
\sum_{x_1,\ldots,x_n\in[c]^n}
\Trace\!\left(
\prod_{p=1}^{n} L(x_p)
\right)
\ket{x_1,\ldots,x_n},
\]
where $L$ is a linear map from local vector states to matrices. Since $\ket{\psi}$ is translationally invariant, the map $L$ is independent of the node index.

Next, the assumption that the algorithm never output colors $x,\ldots,x,y,z$ implies that for all $x,y,z\in[c]$, $\braket{x,\ldots,x,y,z}{\psi}=0$, which in the MPS representation reads
\begin{equation}
\label{eq:intro-zero-trace}
\Trace\!\left(L(x)^{n-2}L(y)L(z)\right)=0
\qquad
\text{for all } x,y,z\in\mathbb [c] .
\end{equation}

We then show (see \cref{lem:block}) that every state generated by a one-round algorithm must be such that, after a suitable change of basis,
\begin{equation} \label{eq:intro-block} 
L(x) = 
\left[ \begin{array}{cccc} 
M_x & 0 & * & 0 \\
* & 0 & * & 0 \\ 
0 & 0& 0 & 0 \\ 
0 & 0& 0 & 0 
\end{array} \right] 
\quad\text{and thus}\quad 
L(x)^{n-2} = 
\left[ \begin{array}{cccc} 
M_x^{n-2} & 0 & * & 0 \\ 
* & 0 & * & 0 \\ 
0 & 0& 0 & 0 \\ 
0 & 0& 0 & 0 
\end{array} \right]\ , 
\end{equation}
where $M_x$ is some matrix that depends (linearly) on $x$, and $\dim(M) \leq c$.

Finally, we prove that \cref{eq:intro-zero-trace} and \cref{eq:intro-block} cannot simultaneously hold. Combining these two conditions implies that $\Trace(L(x))=0$ for every $x\in [c]$. On the other hand, we can find $x$ such that $\tr{M_x} \neq 0$, which contradicts $\tr{L(x)} = 0$.

\paragraph{Round Reduction (\cref{sec:round-reduction}).}
We now give an informal overview of the reduction from $T$-round coloring algorithms to one-round symmetry breaking. Suppose that there exists a \qPN algorithm $\algA$ which produces a proper coloring in $T$ communication rounds with probability $1$. We show how to transform $\algA$ into a one-round \qPN algorithm $\algB$ which breaks symmetry on every block of $2T+2$ consecutive nodes, meaning that such a block can never produce identical outputs.
The idea is for $\algB$ to simulate $\algA$ using only one communication round. In this single round, neighboring nodes distribute enough maximally entangled states to support all the communication that $\algA$ would perform during its $T$ rounds. Afterwards, every message that $\algA$ would send is simulated by teleportation, but without sending the classical teleportation outcome. We call this procedure \emph{wishful teleportation}.

Recall that in standard teleportation \cite{bennett-brassard-etal-1993-teleporting-an-unknown}, a node $p$ can transmit a quantum state $\ket{\psi}$ to a node $p^\prime$ using a shared maximally entangled state $\ket{\Phi}$ and classical communication.
To do so, $p$ performs a joint measurement on $\ket{\psi}$ and its share of $\ket{\Phi}$, obtaining a classical outcome $k$, also known as the \emph{teleportation key}. The system held by $p^\prime$ then becomes similar to $\ket{\psi}$, up to some correction $P_k$.
Thus, once $p$ communicates the classical key $k$ to $p^\prime$, the exact state $\ket{\psi}$ can be recovered by $p^\prime$.
In wishful teleportation, we replace the communication of $k$ by a guess of translation-invariant keys.
This is likely to fail, but our reduction algorithm $\algB$ can exploit it by outputting the guessed keys together with the outcome of the attempted simulation of $\algA$: if every guess is correct, then the simulation algorithm $\algB$ faithfully reproduces $\algA$, which by assumption breaks symmetry; otherwise, the incorrect guess itself breaks symmetry.
This guarantees that symmetry is broken by a one-round algorithm $\algB$ whether $\algA$ is faithfully simulated or not.

More precisely, in each round of communication $t\in\{1,2, ...,T\}$ simulated by wishful teleportation, node $p$ obtains two teleportation keys: $k_{t,L}(p)$ and $k_{t,R}(p)$ for the states it respectively attempts to teleport to its neighbors on the left (node $p-1$) and right (node $p+1$). To recover the states sent by nodes $p-1$ and $p+1$, node $p$ would need the keys $k_{t,R}(p-1)$ and $k_{t,L}(p+1)$, which are not communicated.
Instead, $p$ simply guesses that they satisfy
\begin{equation}
\label{eq:intro-symmetry-not-broken}
k_{t,R}(p) = k_{t,R}(p-1)
\quad\text{and}\quad
k_{t,L}(p)=k_{t,L}(p+1) ,
\end{equation}
and applies the corrections $P_{k_{t, R}(p)}$ and $P_{k_{t, L}(p)}$ to the appropriate registers.
Whenever these equalities hold, the corresponding teleportation step succeeds.
See \Cref{fig:teleport-keys} for an illustration.

\begin{figure}[ht!]
    \centering
    \includegraphics[height=0.5\linewidth]{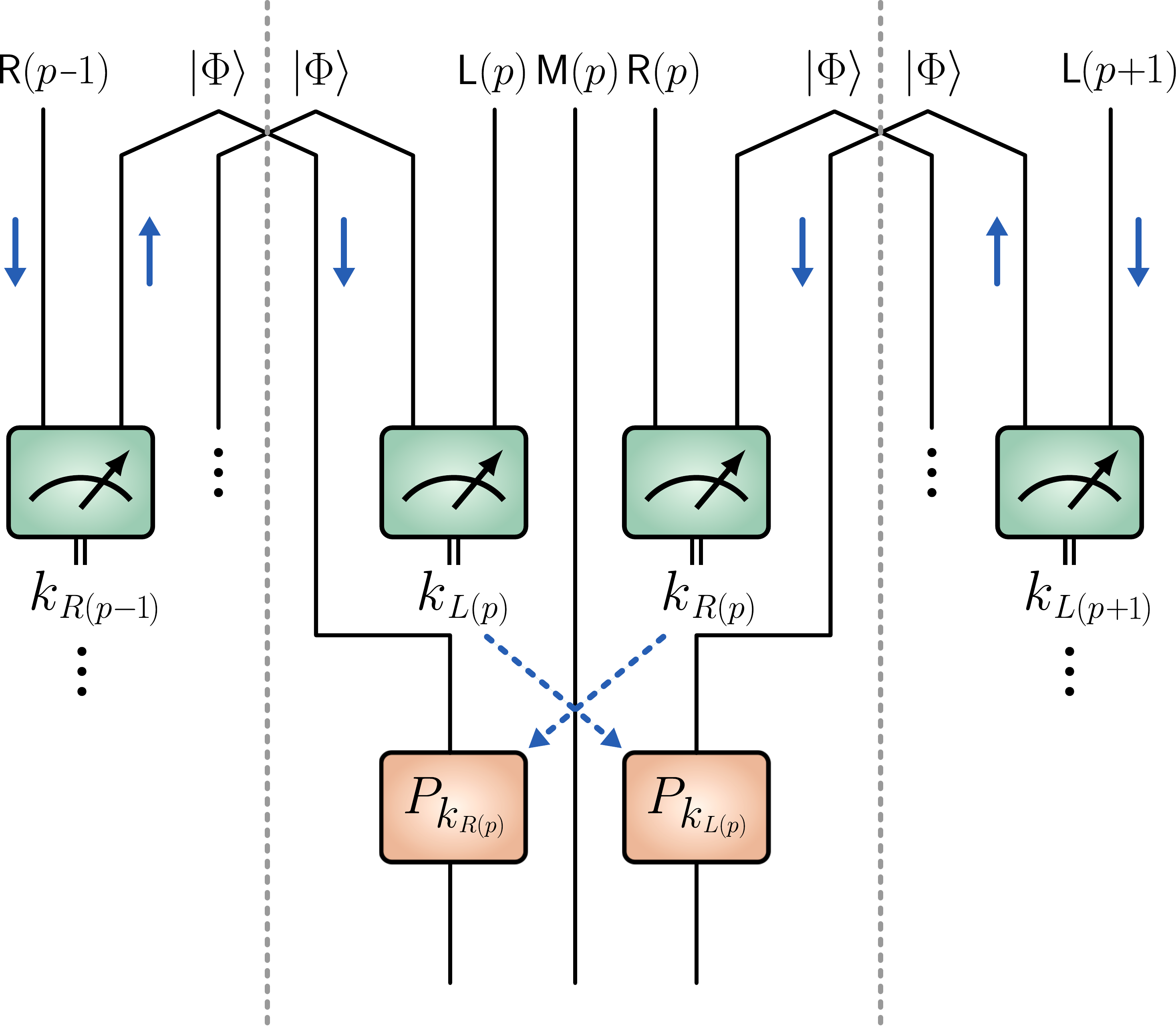}
    \caption{In wishful teleportation, node $p-1$ (resp.\ $p+1$) attempts to teleport the state in $R(p-1)$ (resp.\ $L(p+1)$) to register $L(p)$ (resp.\ $R(p)$) of node $p$. Node $p$ assumes $k_{R(p-1)}=k_{R(p)}$ (resp.\ $k_{L(p+1)}=k_{L(p)}$) and applies the correction $P_{k_{R(p)}}$ (resp.\ $P_{k_{L(p)}}$). Blue arrows are used to indicate the flow of information.}
    \label{fig:teleport-keys}
\end{figure}

Repeating this procedure for all $t \in [T]$ simulated rounds gives each node a list of teleportation keys. After the simulation, each node $p$ performs the final measurement prescribed by $\algA$ and obtains $a(p)$. The output of $p$ in $\algB$ is then
\[
\bigl(k_{1,L}(p),k_{1,R}(p),\ldots,k_{T,L}(p),k_{T,R}(p),a(p)\bigr) .
\]

Now consider the output of any block of $2T+2$ consecutive nodes.
If every teleportation guess inside the block is correct, then $\algA$ is simulated faithfully on its two central nodes, which receive different colors because by assumption $a(p) \neq a(p+1)$ in $\algA$.
Otherwise, some teleportation guess is incorrect.
This means that at least one condition in \Cref{eq:intro-symmetry-not-broken} fails somewhere inside the block, thus at least one pair of guessed teleportation keys differ between two nodes in that block.
Since these keys are included in the output, the outputs on the block cannot all be identical.
Hence, $\algB$ breaks symmetry in a single communication round.

\paragraph{Finitely many qubits and error probability.}
Our analysis of one-round algorithms only rules out the existence of a finite-dimensional quantum algorithm, with local systems of dimension $d^3$, that solves $3$-coloring in $O(1)$ rounds with success probability $1$.
Since the error probability of an algorithm can be written as a polynomial with integer coefficients and real variables, once it is known to be non-zero, results from optimization (see \cite[Theorem 1.1]{jeronimo-perrucci-tsigaridas-2013-on-the-minimum-of-a}) provide a lower bound on the error probability: for each fixed $d$, the success probability is upper bounded by $1-\varepsilon(d)$, for some $\varepsilon(d)>0$ that may go to zero as $d\rightarrow\infty$.
Because our argument for the one-round impossibility (\Cref{sec:one-round}) uses an MPS representation, where the local dimension plays a central role, it is not clear how to obtain a dimension-independent bound $1-\varepsilon$ with $\varepsilon>0$ independent of $d$.

Such a dimension-independent lower bound on the error probability would concern infinite-dimensional quantum strategies as defined by the \emph{tensor-product} model of quantum theory (see e.g. Def.~1 and Def.~1, 4 of \cite{ligthart-gross-2023-the-inflation-hierarchy-and-the}). This model extends finite-dimensional quantum theory by allowing arbitrary Hilbert spaces while preserving the tensor-product rule for combining separated systems.\footnote{In the standard Bell scenario, the corresponding set of correlations is usually denoted $\mathcal{C}_{qs}$. Since this set is not closed in general~\cite{slofstra-2019-the-set-of-quantum-correlations-is-not}, one often considers its closure $\mathcal{C}_{qa}$.}

There is, however, a more general \emph{commuting-operator} model of quantum theory, where independent systems are associated to commuting operator algebras acting on a common Hilbert space (see e.g. Def.~2, 3, 5 of \cite{ligthart-gross-2023-the-inflation-hierarchy-and-the}). This model strictly contains the closure of the tensor-product model, as follows from the breakthrough result $\mathrm{MIP}^*=\mathrm{RE}$~\cite{ji-natarajan-etal-2021-mip-re}.\footnote{Unlike $\mathcal{C}_{qs}$, the set $\mathcal{C}_{qc}$ is closed.}

For standard bipartite Bell scenarios, the main systematic tool for proving that a distribution is impossible in the commuting-operator model is the Navascués--Pironio--Acín (NPA) hierarchy~\cite{navascues-pironio-acin-2007-bounding-the-set-of-quantum,navascues-pironio-acin-2008-a-convergent-hierarchy-of}. In contrast, the result $\mathrm{MIP}^*=\mathrm{RE}$ implies that there can be no analogous complete decision procedure for the tensor-product model in full generality. Thus, a natural route for extending our lower bound beyond finite-dimensional exact strategies is to reformulate it directly in the commuting-operator model, and to use distributed analogues of the NPA hierarchy, such as the inflation-NPA hierarchy~\cite{wolfe-pozas-kerstjens-etal-2021-quantum-inflation-a}.

 \section{Preliminaries}
\label{sec:background}

\paragraph{Mathematical Notation.}
We use $\bN$, $\bR$, and $\bC$ to denote the integers greater than zero, the reals, and the complex numbers respectively. For an integer $n \geq 1$, we use $[n] = \set{1, 2, \ldots, n}$. For a Hilbert space $H$, we use $\ket{u}$ to denote vectors of $H$ and $\bra{u}$ for the corresponding co-vector. The space of complex $d \times d$ matrices is written $\Mat_d(\bC)$ and the space of endomorphisms of a vector space $E$ is denoted by $\End(E)$. The identity matrix is written as $\One$. Given an operator $P$, we write the adjoint (i.e., the conjugate-transpose), the conjugate and the transpose as $P^\dagger$, $\overline{P}$ and $P^\top$, respectively.

\subsection{Quantum Information Theory Basics}
\label{sec:prelim-quantum}
We present here the essential for following our proofs and refer readers to \cite{nielsen-chuang-2010-quantum-computation-and-quantum,watrous-2018-the-theory-of-quantum-information} for an extensive introduction.

For $d\in \bN$, a $d$-dimensional \emph{register} is a complex Euclidian space isomorphic to $\bC^d$ and represents a physical device holding information. Throughout, we use sans-serif letters $\rX, \rY, \rZ$ to represents registers. A \emph{state vector} on $\rX$ is simply a vector $u\in \rX$. A \emph{(quantum) state} on register $\rX \simeq \bC^d$ is a density matrix $\rho$: (1) $\rho$ is a $d \times d$ matrix with complex coefficients, (2) $\rho$ is positive semi-definite, and (3) the trace of $\rho$ is one. To provide some intuition, recall that from the spectral theorem, any positive semi-definite matrix (and, in particular, any density matrix) $\rho$ can be written as 
\begin{equation}  
  \label{eq:decompose-density-matrix}
  \rho = \sum_{i=1}^r \lambda_i \ketbra{u_i}{u_i} \ ,
\end{equation}
where the $\lambda_1, \lambda_2, \ldots, \lambda_r \in \bR_{\geq 0}$ are the eigenvalues of $\rho$ and $\ket{u_1}, \ket{u_2}, \ldots, \ket{u_r} \in \bC^d$ are the corresponding eigenvectors. Since the trace of $\rho$ is one, we have that $\sum_i \lambda_i = 1$; in other words, $\rho$ describes a probability distribution over state vectors. When $\rho = \ketbra{u}{u}$ has rank one, we say that it is a \emph{pure state}. A state on two registers $\rX,\rY$ is a state on $\rX \otimes \rY$.

Operations on a quantum state are in full generality represented by quantum channels. For the purpose of this paper, we only need two kinds of operations on quantum states: applying a unitary, and performing a measurement followed by some classical computation. If $\rX$ is in a state $\rho$ and we apply the unitary $U$ on $X$, the state becomes $U\rho U^\dagger$. For intuition, consider applying $U$ to each $\ket{u_i}$ in \cref{eq:decompose-density-matrix} and observe the resulting density matrix. A \emph{measurement} on $\rX$ with outputs in a \emph{finite} set $\rSigma$ produces a random element in $\rSigma$. Formally, a measurement on $\rX$ is described by a Positive Operator-Valued Measure, henceforth \emph{POVM}: a tuple of positive semi-definite matrices $\Pi_a$ for $a\in \rSigma$ such that $\sum_{a\in\rSigma} \Pi_a = \One$. The output probability distribution is given by \emph{Born's rule}: the probability of output $a\in \rSigma$ is $\tr{ \Pi_a \rho }$. 

\subsection{The \qLOCAL and \qPN models}
\label{sec:quantum-local}

The \qLOCAL model is a model of distributed quantum computing. The network is represented by a graph where each vertex is a quantum processor, and each edge allows quantum communication between neighboring nodes. Computation proceeds in synchronous rounds. In each round, every node may perform arbitrary local quantum computation and exchange quantum messages of unbounded size with its neighbors. After the last round, each node performs a quantum measurement and outputs a classical string. Thus, \qLOCAL is a natural quantum analogue of the classical LOCAL model of distributed algorithms\footnote{In the classical setting, there exist two equivalent descriptions of the LOCAL model: one based on message-passing and one based on mapping local neighborhoods to outputs (see, e.g., \cite[Chapter 4]{hirvonen-suomela-2020-distributed-algorithms-2020}). When nodes of the network in the former description have access to qubits, no formulation akin to the latter is known.}, obtained by replacing local classical computation and communication with their quantum counterparts.

In LOCAL, hence also in \qLOCAL, the nodes are given unique identifiers, usually of $O(\log n)$ bits. When nodes are not provided unique identifiers, the classical model is called the \emph{port-numbering} model, henceforth PN. Randomized PN algorithms that err with probability at most $1/\poly(n)$ can sample random identifiers before the first round, hence randomized PN was heavily used to prove lower bounds in the LOCAL model (e.g., in \cite{brandt-fischer-etal-2016-a-lower-bound-for-the,balliu-brandt-etal-2019-lower-bounds-for-maximal}). In this paper, our focus is the \emph{\qPN} model, i.e., the \qLOCAL model in which nodes are not provided unique identifiers. As the lower bound from \cref{thm:coloring} applies only to zero-error protocols, it does \emph{not} imply a lower bound in the \qLOCAL model.

In this paper, the underlying communication graph is always a \emph{directed cycle}, hence the adjacency relations are implicit in the following definition. Each node has three registers --- left, middle, and right --- to which it can apply a local quantum transformation, and a round of communication swaps the content of the left and right registers between adjacent nodes. To specify an algorithm, it therefore suffices to give a sequence of local quantum transformations and the final measurement.

In full generality, a quantum transformation is described by a Completely Positive Trace Preserving (CPTP) map, see \cite[Eq.\ 1.124 and 1.125]{watrous-2018-the-theory-of-quantum-information} for the definition.
By Stinespring dilation (\cite[Proposition 2.20]{watrous-2018-the-theory-of-quantum-information}), any CPTP map can be implemented by a unitary transformation on a larger Hilbert space.
Since the local dimension in the \qPN model is arbitrary, this enlarged space can be embedded in the model.
Thus, \qPN algorithms can be defined in terms of unitary transformations without loss of generality.
The final measurement is formally described by a POVM as defined in \cref{sec:prelim-quantum}.
See \cref{fig:example-quantum-LOCAL} for an illustration. 

\begin{definition}\label{def:q-local}\label{def:q-PN}
A $T$-round \qPN algorithm with output alphabet $\rSigma$ consists of
\begin{itemize}
    \item unitary matrices $U_0, U_1, U_2, \ldots, U_T$ on $\bC^d \otimes \bC^d \otimes \bC^d$ for some $d\in \bN$, and
    \item a POVM $\Pi = (\Pi_a : a \in \rSigma)$ on the same space.
\end{itemize}
\end{definition}

For a given integer $n \geq 1$, a \qPN algorithm induces a distribution on $\rSigma^n$ representing the joint output of the nodes. The nodes correspond to integers of $[n]$ and if the output is $(x_1, x_2, \ldots, x_n) \in \rSigma^n$, then $x_p$ corresponds to the output of node $p$. The (global) output distribution is obtained by running the following process.
\begin{definition}\label{def:run-qLOCAL}
Let $n\in\bN_{\geq1}$. On the $n$-cycle, each vertex $p \in [n]$ has registers
\begin{equation*}
  \rL(p)\otimes\rM(p)\otimes\rR(p)\cong
  \bC^d\otimes\bC^d\otimes\bC^d ,
\end{equation*}
with indices understood cyclically, i.e., $\rL(n+1) = \rL(1)$ and $\rR(0) = \rR(n)$. Initially, the global state is $\bigotimes_{p = 1}^n U_0 \ket{ 000 }_{\rL(p)\rM(p)\rR(p)}$.
For $t = 1, 2, \ldots, T$, each vertex $p$ does
\begin{enumerate}
    \item\label[step]{step:swap-left} swap the content of $\rL(p)$ with $\rR(p-1)$, and \item\label[step]{step:unitary} apply $U_t$ to $\rL(p) \otimes \rM(p) \otimes \rR(p)$.
\end{enumerate}
Finally, each vertex $p$ measures $\Pi$ on $\rL(p) \otimes \rM(p) \otimes \rR(p)$ and outputs the observed label.
\end{definition}

Each iteration of \cref{step:swap-left,step:unitary} is called a \emph{round}. This model assumes that vertices have a consistent orientation of the $n$-cycle, i.e., they agree on which side is the left one.
An algorithm that uses this information can only be faster than an undirected algorithm, therefore lower bounds within \Cref{def:q-local,def:run-qLOCAL} also hold for undirected models.
 \section{Proof of \cref{thm:coloring}}

In this section, we formally state our two main technical contributions, \cref{thm:reduction,thm:no-one-round-symmetry-breaking}, and use them to prove \cref{thm:coloring}: any $T(n)$-round \qPN algorithm that $c$-colors $n$-cycles with $T(n) \leq n/2-2$ must produce a monochromatic edge with non-zero probability. The choice of the \qPN algorithm can depend on $n$, in other words the nodes have knowledge of $n$.

Let us introduce some terminology first. For a \qPN algorithm $\algA$, the \emph{support} of $\algA$ is the set of vectors $\vec{a} \in \bigcup_{n\geq 3} \rSigma^n$ that $\algA$ outputs with non-zero probability.
For $r \in \bN$, an $r$-\emph{block} is a set of $r$ consecutive natural numbers. We say that the $r$-block $B$ is \emph{monochromatic} in $\vec{a} = (a_1, \ldots, a_n) \in \rSigma^n$ if and only if all $a_i$ for $i\in B$ have the same value. In particular, a coloring algorithm that errs with zero probability is such that all output vectors containing a monochromatic 2-block have probability zero.

First, we prove that any 1-round \qPN algorithm produces a monochromatic $(n-2)$-block with non-zero probability, no matter the number of output labels.

\begin{restatable}{theorem}{ThmNoOneRoundSymBreaking}
    \label{thm:no-one-round-symmetry-breaking}
    Let $n \geq 4$ an integer. Any 1-round \qPN algorithm has a monochromatic $(n-2)$-block in its support.
\end{restatable}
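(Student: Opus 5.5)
We argue by contradiction, following the outline of the technical overview. Fix a 1-round \qPN algorithm $(U_0,U_1,\Pi)$ with local dimension $d$, and suppose that no output vector with a monochromatic $(n-2)$-block has positive probability.

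\textbf{Step 1: MPS form.} After the single swap round, node $p$ holds $\rR(p-1)\otimes\rM(p)\otimes\rL(p+1)$ of the product state $\bigotimes_p U_0\ket{000}$, and then applies $U_1$. Write $U_0\ket{000}=\sum_{i,j,k}\psi_{ijk}\ket{i,j,k}$. The post-round state is then a translation-invariant MPS with bond dimension $d$. The virtual index on the bond between nodes $p$ and $p+1$ is the pair of indices of $\rR(p)$ and $\rL(p+1)$, which both originate in $U_0\ket{000}$ of one node, so it is more convenient to group per original node. I would define, for each outcome $a$, the operator $K_a=\sqrt{\Pi_a}\,U_1$ (or Kraus operators of the POVM). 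Then
\[
\Prob{a_1,\dots,a_n}=\bigl\|\textstyle\bigotimes_p K_{a_p}\ \text{applied to the shifted product state}\bigr\|^2 .
\]
The amplitude of each Kraus branch is $\Trace\bigl(\prod_p L(a_p,\mu_p)\bigr)$. Here $L(a,\mu)$ is a $d\times d$ matrix in the bond index, namely the $\rR$-index of node $p-1$ feeding in and the $\rL$-index of node $p+1$. Positivity implies that the probability is zero iff every such trace vanishes. Up to a change of basis this is exactly the block form \cref{eq:intro-block}, and I would state it as \cref{lem:block}. The matrices $L(a,\mu)$ factor as $A\,X_{a,\mu}\,B$, with $A$ and $B$ of rank at most one in the appropriate sense, coming from $\psi$ contracted on the left and right legs. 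This is the source of the zero columns/rows.

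\textbf{Step 2: Zero-trace consequences.} The hypothesis gives $\Trace(L(x)^{n-2}L(y)L(z))=0$ for all outcomes $x$ (with any Kraus labels, varied independently) and all $y,z$. Summing over $y,z$ with all their Kraus labels yields $\sum_{y,\mu}L(y,\mu)$, which plays the role of a transfer operator $E$, giving $\Trace(L(x)^{n-2}E^2)=0$. Using the block structure, $E^2$ restricted to the upper-left block should be the identity-like normalization. More concretely, I would choose $y,z$ cleverly and exploit linearity in $x$: replacing $x$ by formal linear combinations is not allowed, since outputs are fixed labels. Instead, I would use that every $L(x)^{n-2}$ has upper-left block $M_x^{n-2}$, and that the entries marked $*$ pair with zero blocks in the trace. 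This gives $\Trace(M_x^{n-2}N)=0$ for the relevant block $N$ of $L(y)L(z)$, summed over $y,z$. I expect this to reduce to $\Trace(M_x^{n-2})=0$ for all $x$. A cleaner route might then be to derive $\Trace(M_x^{k})=0$ for all $k$, and hence that $M_x$ is nilpotent.

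\textbf{Step 3: Contradiction.} Normalization gives $\sum_x \Prob{\text{all nodes output } x\text{ on many sites}}$-type identities. In particular, $\sum_{x,\mu} M_{x,\mu}$ should equal a nonzero (unit-trace or identity) operator coming from $\sum_a\Pi_a=\One$ and unitarity. Hence not all $M_x$ can be nilpotent or traceless: some $\Trace(M_x)\neq 0$. This contradicts the conclusion of Step 2.

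\textbf{Main obstacle.} The hard part is Step 2: passing from vanishing of the specific traces $\Trace(L(x)^{n-2}L(y)L(z))$ at the single length $n$ to a statement like nilpotency of $M_x$, or to $\Trace M_x=0$. My first attempt would be to use the freedom in $y,z$, together with the Kraus labels, to make $L(y)L(z)$ span enough of the upper-left block that $M_x^{n-2}=0$ follows. Since $\dim M_x\le d$ and $n-2\ge 2$, it is not automatic that $M_x^{n-2}=0$ forces $\Trace M_x=0$. It does force $M_x$ to be nilpotent, however, and therefore traceless, which is what Step 3 needs.
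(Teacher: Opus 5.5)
You have the right skeleton (trace formula, block form, nilpotency implies zero trace, contradiction from a nonzero trace), and it matches the paper's. The decisive step is missing, though. You never show how the vanishing of $\Trace(L(x)^{n-2}L(y)L(z))$ forces $M_x^{n-2}=0$, and the routes you sketch do not work:

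\begin{itemize}
\item Summing over $y,z$ collapses the available freedom into a single matrix $N$. It yields only $\Trace(M_x^{n-2}N)=0$.
\item Even $\Trace(M_x^{n-2})=0$ would not give $\Trace(M_x)=0$: take $M_x=\mathrm{diag}(1,i)$ with $n-2=2$.
\item The hypothesis is at the fixed length $n$, so it gives no access to other powers.
\end{itemize}

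The missing idea is a surjectivity statement, property (T3) of \cref{lem:block}. The upper-left blocks $\tau(M)$ of the matrices $M\in S^2=\operatorname{span}\{L(y)L(z)\}$ exhaust all of $\End(E)$. Then $\Trace(\tau(L(w))^{n-2}X)=0$ for every $X\in\End(E)$, hence $\tau(L(w))^{n-2}=0$.

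This rests on knowing the image of $L$ exactly: $\im(L)=(V\otimes\Mat_d(\bC))P$, with $V$ spanned by the Schmidt components of $u$ across $\rM$ versus $\rL\rR$. From this, $S^2=V\Mat_d(\bC)\otimes\Mat_d(\bC)V=p\Mat_d(\bC)\otimes\Mat_d(\bC)q$. Here $p,q$ are the projections onto $U=\sum_{A\in V}\im(A)$ and $W=(\bigcap_{A\in V}\ker A)^\perp$, and $E=U\otimes W$. Your Step 1 does not deliver this. The bond space is $\bC^d\otimes\bC^d$, not $d$-dimensional, because each copy of $u$ straddles three consecutive nodes. The zero rows and columns come from $\im(A)\subseteq U$ and $\ker(A)\supseteq\bigcap_{A'\in V}\ker A'$ for $A\in V$, not from rank-one factors.

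Two further corrections.

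First, linear combinations are in fact available. Zero probability of $a$ on the first $n-2$ nodes gives the vector identity $(\Pi_a^{\otimes n-2}\otimes\One\otimes\One)C\ket{u,\dots,u}=0$. So you may contract with the bra $\bra{x}\Pi_a$ on each of the first $n-2$ sites for arbitrary $x\in H$, and with arbitrary $\bra{y},\bra{z}$ on the last two. This is exactly what lets $L(y)L(z)$ range over all of $S^2$.

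Second, in Step 3 normalization does not produce a nonzero trace. The identity $\sum_a\Pi_a=\One$ only serves to pass from $\Trace L(\Pi_a^\dagger x)=0$ for every $a$ to $\Trace L(x)=0$ for every $x\in H$. The contradiction again comes from the image of $L$: for $T\in V\setminus\{0\}$ with $T_{i,j}\neq 0$, the matrix $(T\otimes E_{j,i})P\in\im(L)$ has trace $T_{i,j}\neq 0$.
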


Then, we prove that any $T(n)$-round $c$-coloring algorithm can be turned into a 1-round algorithm where nodes output values from $[d]^{4T} \times [c]$ and produces no monochromatic $(2T(n)+2)$-block.

\begin{restatable}{theorem}{ThmReduction}\label{thm:reduction}
    Let $r \in \bN$ and $n \geq 4$ and $T \leq \lfloor \frac{n - r}{2} \rfloor$.
    Suppose there is a $T$-round \qPN algorithm $\algA = (U_0, U_1, \ldots, U_T, \Pi)$ of local dimension $d \in \bN$ with outputs in $\rSigma$ and no monochromatic $r$-block in its support. Then there exists a 1-round algorithm $\algB$ with outputs in $[d]^{4T} \times \Sigma$ and no monochromatic $(2T+r)$-block in its support.
\end{restatable}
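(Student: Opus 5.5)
The plan is to build $\algB$ by \emph{wishful teleportation}, as in the overview. The proof then has three parts: (i) package the whole simulation as one unitary and one POVM after a single swap round; (ii) show that, conditioned on any vector of teleportation keys, the execution of $\algB$ equals an execution of $\algA$ with key-dependent Pauli errors inserted on the communication links; (iii) use a light-cone argument to show that a monochromatic $(2T+r)$-block in $\algB$ would force a monochromatic $r$-block in the support of $\algA$.

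\textbf{Construction.} Let $\{P_{(a,b)} = X^aZ^b : (a,b)\in[d]^2\}$ be the generalized Pauli operators on $\bC^d$, and let $\ket{\Phi}$ be the maximally entangled state on $\bC^d\otimes\bC^d$.
\begin{enumerate}
\item In $\algB$, the initial unitary $U_0'$ prepares $U_0\ket{000}$ on working registers $\rL,\rM,\rR$. It also prepares, for each $t\in[T]$, two copies of $\ket{\Phi}$. One half of the first copy is placed in the outgoing left register and one half of the second in the outgoing right register.
\item The single swap round then gives node $p$, for each $t$, one maximally entangled pair shared with $p-1$ and one shared with $p+1$.
\item Node $p$ then simulates rounds $t=1,\dots,T$ of $\algA$. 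It Bell-measures its working $\rR(p)$ together with its half of the round-$t$ pair shared with $p+1$, obtaining a key $k_{t,R}(p)\in[d]^2$. It Bell-measures $\rL(p)$ with its half of the pair shared with $p-1$, obtaining $k_{t,L}(p)$.
\item Node $p$ takes its remaining half of the pair shared with $p-1$ as the new $\rL(p)$ and corrects it by $P_{k_{t,R}(p)}^{-1}$, i.e.\ it guesses $k_{t,R}(p-1)=k_{t,R}(p)$. Symmetrically, it takes the half shared with $p+1$ as the new $\rR(p)$, corrects it by $P_{k_{t,L}(p)}^{-1}$, and then applies $U_t$.
\item Finally node $p$ measures $\Pi$ and outputs $(k_{1,L}(p),k_{1,R}(p),\dots,k_{T,R}(p),a(p))\in[d]^{4T}\times\rSigma$.
\end{enumerate}
All post-swap operations are local. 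By deferring the measurements, so that the corrections become controlled unitaries and the keys are copied into ancillas, the post-swap operations become one unitary $U_1'$ followed by one POVM on an enlarged local space. Hence $\algB$ is a valid 1-round algorithm in the sense of \cref{def:q-PN}.

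\textbf{Key-conditioned equivalence.} By the standard teleportation identity, each Bell outcome has probability $d^{-2}$ irrespective of the input. Conditioned on the full key vector $\vec k$, the post-measurement state equals the state of $\algA$'s circuit in which the register sent over the edge $(p,p+1)$ at round $t$ is hit by the unitary error $E_{t}(p\to p+1)=P_{k_{t,L}(p+1)}^{-1}P_{k_{t,R}(p)}$, and similarly on the left-going links. Thus
\[
\Pr_{\algB}[\vec k,\vec a]=d^{-4Tn}\,\Pr_{\algA^{E(\vec k)}}[\vec a].
\]
Here $E_{t}(p\to p+1)$ is the identity (up to phase) whenever the guessed equality $k_{t,R}(p)=k_{t,R}(p-1)$ and its left-going analogue hold along that link.

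\textbf{Light cone and conclusion.} Suppose some $(2T+r)$-block $B=\{q,\dots,q+2T+r-1\}$ is monochromatic in the support of $\algB$. Then all nodes in $B$ output equal keys, so every error on a link internal to $B$ is trivial. Let $C=\{q+T,\dots,q+T+r-1\}$ be the central $r$ nodes. An error inserted at round $t$ on a boundary link of $B$ can only influence nodes within distance $T-t<T$ of that link. So, by the standard no-signaling property of local unitaries in a circuit, the marginal distribution of $\algA^{E(\vec k)}$ on $C$ does not depend on errors outside $B$. Replacing all those errors by the identity, this marginal equals the marginal of $\algA$ on $C$.

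The assumption $T\le\lfloor (n-r)/2\rfloor$ guarantees that $B$ fits in the cycle without wrapping, so the light cones from the two ends of $B$ do not meet in $C$. Summing the displayed identity over keys outside the block, positive probability of the monochromatic event in $\algB$ gives positive probability in $\algA$ that the $r$-block $C$ is monochromatic. This contradicts the hypothesis on $\algA$.

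I expect the main obstacle to be the careful bookkeeping in the light-cone step. One has to track exactly which inserted errors lie in the backward light cone of $C$, including the off-by-one check at the boundary links of round $t=1$. One also has to justify rigorously that conditioning on key outcomes and then marginalizing commutes correctly with the error insertion.
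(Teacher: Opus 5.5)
Your proof is correct and uses the same wishful-teleportation construction as the paper, but the analysis follows a different route.

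\textbf{How the paper argues.} The paper never describes $\algB$ globally. It fixes a block $J_s$ and defines shrinking intervals $I_s(t)=[s+t,s+2T+r-1-t]$. It also defines events $C_{\kappa,t}(J_s)$, which require only the key equalities that feed into $I_s(\tau)$ for $\tau\le t$. Then it proves by induction on $t$ that the reduced state of $\algB$ on $I_s(t)$, conditioned on $C_{\kappa,t}(J_s)$, equals the reduced state of $\algA$. The tool is a chain-teleportation lemma: parallel teleportation along a path with correct corrections is a relabelling of registers that preserves correlations with an environment.

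\textbf{How you argue, and what each route buys.} You first prove an exact global ``Pauli frame'' identity $\Pr_{\algB}[\vec k,\vec a]=d^{-4Tn}\Pr_{\algA^{E(\vec k)}}[\vec a]$. You then invoke a generic backward-light-cone lemma for circuits with inserted unitaries.
\begin{itemize}
\item Your route makes the probabilistic bookkeeping transparent. Keys are exactly uniform and independent of the simulated state, so conditioning on them and marginalizing over keys outside the block is immediate. The paper's induction over conditioned reduced states relies on this fact without isolating it.
\item The paper's route needs no separate light-cone lemma, because the intervals $I_s(t)$ are exactly the backward light cone of the central $r$ nodes, built into the induction.
\item The paper also identifies the minimal set of key equalities needed. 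You instead use the stronger but available fact that all keys agree on all of $B$ at every round.
\end{itemize}
Your off-by-one check is right. An error at round $t\ge 1$ on a boundary link lands at $q$ or $q-1$ (symmetrically on the right end) and spreads at most $T-t\le T-1$ further, so it misses $C=\{q+T,\dots,q+T+r-1\}$. The condition $2T+r\le n$ rules out wrap-around.

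\textbf{Slips to fix.}
\begin{itemize}
\item Your link error should be $E_t(p\to p+1)=P_{k_{t,R}(p+1)}^{-1}P_{k_{t,R}(p)}$, since by your step 4 node $p+1$ corrects its new $\rL(p+1)$ with its own right key. You wrote $P_{k_{t,L}(p+1)}^{-1}P_{k_{t,R}(p)}$. Taken literally, equal keys on $B$ would then give $P^{-1}_{\kappa_{t,L}}P_{\kappa_{t,R}}$, which is not trivial, and your light-cone step would fail.
\item The matching guessed equality for that link is $k_{t,R}(p+1)=k_{t,R}(p)$, not $k_{t,R}(p)=k_{t,R}(p-1)$.
\item Label the Bell basis so that outcome $k$ leaves exactly $P_k$ on the receiver; only then is $P_k^{-1}$ the right correction. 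The paper's labelling leaves $P_k^\dagger$ and corrects with $P_k$.
\item Each link carries two pairs per round, one created by each endpoint. Your step 2 says ``one pair shared with $p-1$,'' but your word ``remaining'' and the output alphabet $[d]^{4T}$ already presuppose two.
\end{itemize}
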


\begin{proof}[Proof of \cref{thm:coloring}]
Fix $n\geq4$, and let $\algA$ be a \qPN algorithm that properly
$c$-colors the directed $n$-cycle in $T(n) \leq n/2 - 1$ rounds, i.e., it has $\rSigma = [c]$ and no monochromatic 2-block in its support.
By \Cref{thm:reduction}, there is a one-round \qPN algorithm $\algB$ with output $[d]^{4T} \times[c]$ with no monochromatic $(2T(n)+2)$-block in its support.
By \Cref{thm:no-one-round-symmetry-breaking}, the algorithm $\algB$ produces a monochromatic $r$-block with non-zero probability, where $r = n-2$ when $n$ is even, and $r = n-3$ when $n$ is odd. In particular, $\algB$ has a monochromatic $r$-block in its support for all $r \leq n-2$. Hence, we must have $2T(n)+2 > n-2$, meaning that $T(n) > \floor{n/2} - 2$.
\end{proof}
 \section{No One-Round Quantum Symmetry Breaking (\cref{thm:no-one-round-symmetry-breaking})}
\label{sec:one-round}

In this section, we prove that symmetry breaking is not possible with probability one in one-round of \qPN. Recall that $n$, the number of nodes in the cycle, is fixed first and the one-round \qPN can depend on it, i.e., nodes know the length of the cycle.

\ThmNoOneRoundSymBreaking*

The proof of \cref{thm:no-one-round-symmetry-breaking} follows from \cref{lem:encoding,lem:block}. The former describes the output distribution as the trace of product of certain matrices, while the latter provides some structure on those matrices. We state the lemmas here, derive the proof of \cref{thm:no-one-round-symmetry-breaking}, and prove them in \cref{sec:proof-lem-encoding} and \cref{sec:proof-block} respectively. 

Call $d\in \bN$ the dimension of the local registers: $\rL(p), \rM(p), \rR(p) \simeq \bC^d$ and let $H = \bC^d \otimes \bC^d \otimes \bC^d$. The communication round swapping each $\rL(p)$ with $\rR(p-1)$ is implemented by the unitary matrix $C$ defined as
    \[
    C \paren*{ \bigotimes_{p = 1}^n \ket{ \ell_p }_{\rL(p)} \ket{ m_p }_{ \rM(p) } \ket{ r_p }_{\rR(p)} }
    = \bigotimes_{p = 1}^n \ket{ r_{p-1} }_{\rL(p)} \otimes \ket{ m_p }_{ \rM(p) } \otimes \ket{ \ell_{p+1} }_{ \rR(p) } \ ,
    \]
    for all $\ell_p, m_p, r_p \in [d]$ and all $p\in \set{0, 1, \ldots, n}$ such that $\ell_0 = \ell_n$ and $r_0 = r_n$. 
In \cref{lem:encoding}, the state $u$ should be thought of as the local state of each player \emph{before} the first (and unique) communication round, and $C\ket{ u, \ldots, u}$ as the global state after the communication round.

\begin{restatable}{lemma}{LemEncoding}
    \label{lem:encoding}
    For all $n \geq 4$, $\ket{u} \in H \setminus \set{ 0 }$,
    there exists a conjugate-linear map\footnote{a conjugate-linear map $f : U \to V$ between complex vector spaces is additive, i.e., $f(x + y)=f(x) + f(y)$ for all vectors $x,y\in U$, but conjugates the scalar factors, i.e., $f(\lambda x) = \overline{\lambda}f(x)$ for all $x\in U$ and $\lambda\in \bC$} $L : H \to \Mat_{d^2}(\bC)$ and a Hilbert space $V \subseteq \Mat_d(\bC)$ with $0 < \dim(V) \leq d$ such that 
    \[
    \im(L) = ( V \otimes \Mat_d(\bC) ) P
    \quad\text{where}\quad
    \forall \alpha,\beta \in \bC^d \text{ we have } P( \alpha \otimes \beta ) = \beta \otimes \alpha \ ,
    \]
    and for all $v_1, \ldots, v_n \in H$, we have that
    \[
    \bra{ v_1, v_2, \ldots, v_n } C \ket{ u, \ldots, u } = 
    \tr*{ \prod_{p=1}^n L(v_p) } \ .
    \]
\end{restatable}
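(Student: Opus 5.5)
The plan is to build $L$ directly from the amplitudes of $\ket{u}$ and $\ket{v}$, as a matrix-product-state encoding whose bond index on the edge between node $p$ and node $p+1$ is the pair of $d$-dimensional indices swapped across that edge. Write $\ket{u} = \sum_{l,m,r\in[d]} u_{lmr}\ket{l,m,r}$ and $\ket{v}=\sum v_{lmr}\ket{l,m,r}$. By the definition of $C$, node $p$ holds $(r_{p-1}, m_p, l_{p+1})$ after the round. Expanding in the computational basis gives
\[
\bra{v_1,\ldots,v_n}C\ket{u,\ldots,u} = \sum_{(l_p,m_p,r_p)_{p}} \prod_{p=1}^n \overline{(v_p)_{r_{p-1} m_p l_{p+1}}}\; u_{l_p m_p r_p}.
\]
Each $l_p$ and each $r_p$ appears in exactly two factors. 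Hence the edge $(p,p+1)$ carries the pair $(r_p, l_{p+1})$, and each node $p$ contributes one factor that depends only on its incoming pair $(r_{p-1},l_p)$, its outgoing pair $(r_p,l_{p+1})$, and its summed index $m_p$.

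I therefore define $L(v)\in\Mat_{d^2}(\bC)=\End(\bC^d\otimes\bC^d)$ entrywise by
\[
L(v)_{(a,b),(a',b')} = \sum_{m\in[d]} \overline{v_{a m b'}}\; u_{b m a'},
\]
where $(a,b)=(r_{p-1},l_p)$ and $(a',b')=(r_p,l_{p+1})$. I would fix the ordering of the two bond coordinates so that the $u$-factor sits in the first tensor slot, matching the statement. Multiplying $L(v_1)\cdots L(v_n)$ and taking the trace contracts exactly the shared pairs cyclically, which reproduces the sum above. Conjugate-linearity of $L$ is clear because $v$ enters only through $\overline{v}$.

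For the image, I set $B_m := (u_{b m a'})_{b,a'}\in\Mat_d(\bC)$, the slices of $u$ along the middle index, and $A_m := (\overline{v_{a m b'}})_{a,b'}$. The entry formula then reads $L(v) = \big(\sum_m B_m\otimes A_m\big)P$, possibly after relabelling which factor is which. The swap $P$ exactly accounts for the row/column mismatch: in the tensor product the column index is ordered as $(a',b')$ in one factor convention and $(b',a')$ in the other. I take $V:=\operatorname{span}\{B_m : m\in[d]\}\subseteq\Mat_d(\bC)$ with the Hilbert--Schmidt inner product. Then $\dim V\le d$, and $\dim V>0$ because $u\neq0$ forces some $B_m\neq0$.

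The inclusion $\im L\subseteq (V\otimes\Mat_d)P$ is immediate. For the reverse inclusion, as $v$ ranges over $H$ the matrices $A_1,\ldots,A_d$ range independently over all of $\Mat_d(\bC)$. Given a target $\sum_j V_j\otimes X_j$ with $V_j$ a basis of $V$, I write each $V_j=\sum_m c_{jm}B_m$ and choose $A_m=\sum_j c_{jm}X_j$. This gives $\sum_m B_m\otimes A_m = \sum_j V_j\otimes X_j$, so the image is exactly $(V\otimes\Mat_d)P$. I expect the main obstacle to be purely bookkeeping: fixing the order of the two bond coordinates, the direction of the trace, and the convention for $P$ so that the factor built from $u$ lands in the first slot and the cyclic indices $\ell_0=\ell_n$, $r_0=r_n$ close up consistently. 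I would check this first on $n=4$ with explicit indices.
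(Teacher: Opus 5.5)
Your proposal is correct, and the map you construct is in fact the paper's. The paper sets $L(v)=\bigl(\sum_i T_i\otimes B_i(v)\bigr)P$, where $\ket{u}=\sum_i \ket{m_i}\otimes\ket{T_i}$ is a Schmidt decomposition across $\rM$ and $(B_i(v))_{a,b}=\braket{v}{a,m_i,b}$. The sum over $i$ is just a contraction over $\rM$, so it does not depend on the orthonormal basis chosen and coincides with your $\sum_m B_m\otimes A_m$. Likewise, $V=\operatorname{span}\{T_i\}$ equals your span of the $\rM$-slices of $u$.

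The difference is in how the trace identity is checked. The paper expands the amplitude and regroups the $2n$ factors along the odd and even sublattices. For even $n$ this gives $\sum_{i}\tr{T_{i_1}B_{i_2}(v_2)\cdots}\,\tr{B_{i_1}(v_1)T_{i_2}\cdots}$, and for odd $n$ a single long trace. It then shows separately, by induction with $(A\otimes B)P(C\otimes D)P=AD\otimes BC$, that $\tr{\prod_p L(v_p)}$ has the same form.

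Your entrywise definition takes the pair $(r_p,\ell_{p+1})$ as the bond index on the edge $(p,p+1)$. This turns the identity into a direct index match that is uniform in the parity of $n$. It also makes the Schmidt decomposition unnecessary: already $L(\ket{a,m,b'})=(B_m\otimes E_{a,b'})P$, so even your coefficient-matching step for surjectivity is optional. In exchange, the paper's route exhibits the two-rail structure and the product rule $(A\otimes B)P(C\otimes D)P=AD\otimes BC$, which it reuses in \cref{lem:block} to identify $S^2$.

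On the ordering you left open: with rows indexed by $(r_{p-1},\ell_p)$ as you wrote, one has $\bigl((X\otimes Y)P\bigr)_{(a,b),(a',b')}=X_{a,b'}Y_{b,a'}$. Your entry formula therefore gives $\bigl(\sum_m A_m\otimes B_m\bigr)P$, with the $u$-slice in the second slot. Ordering the pair as $(\ell_p,r_{p-1})$ instead amounts to replacing $L$ by $PLP$. This leaves every $\tr{\prod_p L(v_p)}$ unchanged because $P^2=\One$, and it yields $\bigl(\sum_m B_m\otimes A_m\bigr)P$ as the statement requires.
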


The matrices $M = L(x)$ for some $x\in H$ have some special structure: indeed, with respect to the right basis, the matrix $M$ has the following block decomposition
\[
M = 
\left[
    \begin{array}{cccc}
        \tau(M) & 0 & * & 0 \\
        * & 0 & * & 0 \\
        0 & 0& 0 & 0 \\
        0 & 0& 0 & 0 
    \end{array} 
\right] \ .
\]
We also use that this decomposition is compatible with products of matrices and that the map $\tau$ is surjective.
Formally, we prove the following; see \cref{sec:proof-block} for the proof.

\begin{restatable}{lemma}{LemBlock}
    \label{lem:block}
    Let $d\in \bN \setminus \set{0}$ and $V$ be a subspace of $\Mat_d(\bC)$ with $0 < \dim(V) \leq d$. Call $S = ( V \otimes \Mat_d(\bC) ) P$ and $S^k = \operatorname{span}\set{ M_1 M_2 \ldots M_k : M_1, M_2, \ldots, M_k \in S}$.
    There exists a subspace $E$ of $\bC^d \otimes \bC^d$ and a map $\tau : \Mat_{d^2}(\bC) \to \End(E)$
with the following properties:
    \begin{enumerate}[label=(T\arabic*)]
        \item\label[part]{part:tau-mul}
        $\tau(M N) = \tau(M) \tau(N)$ for all $M, N \in \bigcup_{k \geq 1} S^k$,
\item\label[part]{part:tau-trace}
        $\tr{ M } = \tr{ \tau(M) }$ for all $M \in \bigcup_{k \geq 1} S^k$,
        \item\label[part]{part:tau-surj}
        for all $X \in \End(E)$, there exists $M\in S^2$ such that $\tau(M) = X$. 
    \end{enumerate}
\end{restatable}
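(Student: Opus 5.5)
The plan is to compute the shape of the elements of $S$ and $S^k$ directly, and then build $\tau$ as the map induced on a suitable quotient of two subspaces that every element of $\bigcup_k S^k$ respects.

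\textbf{Computing $S$ and $S^2$.} For $A\in V$ and $B\in\Mat_d(\bC)$ we have $(A\otimes B)P(\alpha\otimes\beta)=A\beta\otimes B\alpha$. Since $P(X\otimes Y)P=Y\otimes X$, we get
\[
(A_1\otimes B_1)P(A_2\otimes B_2)P=(A_1\otimes B_1)(B_2\otimes A_2)=A_1B_2\otimes B_1A_2 .
\]
Introduce $W=\operatorname{span}\bigcup_{A\in V}\im(A)\subseteq\bC^d$ and $K=\bigcap_{A\in V}\ker(A)$. Then the span of the products $A_1B_2$ (with $A_1\in V$, $B_2$ arbitrary) is $V\Mat_d(\bC)=\set{X:\im X\subseteq W}$. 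Likewise the span of the products $B_1A_2$ is $\Mat_d(\bC)V=\set{Y:\ker Y\supseteq K}$. This is the usual description of the right and left ideals generated by $V$; a short check of each inclusion suffices. Consequently
\[
S^2=\set{X:\im X\subseteq W}\otimes\set{Y: Y(K)=0}.
\]

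\textbf{Two invariant subspaces.} Set $U=W\otimes\bC^d$ and $N=\bC^d\otimes K$. The image of every element of $S$ lies in $U$. Moreover $(A\otimes B)P$ sends $\alpha\otimes\beta$ with $\beta\in K$ to $0$, so every element of $S$ vanishes on $N$. These two properties are preserved by products: the image of a product lies in the image of its leftmost factor, and the kernel of a product contains the kernel of its rightmost factor. Hence every $M\in\bigcup_{k\geq1}S^k$ satisfies $M(\bC^{d}\otimes\bC^d)\subseteq U$ and $M(N)=0$. In particular $M(U)\subseteq U$ and $M(U\cap N)=0\subseteq U\cap N$.

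\textbf{Definition of $\tau$ and the three properties.} Let $E$ be a complement of $U\cap N$ inside $U$, identified with $U/(U\cap N)\cong W\otimes(\bC^d/K)$. Let $Q$ be the projection of $U$ onto $E$ along $U\cap N$. Define $\tau(M)=Q\,M|_E$ for all $M$; outside $\bigcup_k S^k$ the values are irrelevant.
\begin{itemize}
\item \cref{part:tau-mul}: on $\bigcup_k S^k$, $\tau(M)$ is exactly the map induced by $M$ on the quotient $U/(U\cap N)$. Induced maps on a quotient of invariant subspaces are multiplicative, so $\tau(MN)=\tau(M)\tau(N)$.
\item \cref{part:tau-trace}: because $\im M\subseteq U$, we have $\tr{M}=\tr{M|_U}$. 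Using the flag $U\cap N\subseteq U$, this trace splits as $\tr{M|_{U\cap N}}+\tr{\tau(M)}$, and the first term is $0$ because $M$ vanishes on $U\cap N$.
\item \cref{part:tau-surj}: for $X\otimes Y\in S^2$, the map $\tau(X\otimes Y)$ equals $X|_W\otimes\bar Y$. Here $\bar Y$ is the map induced by $Y$ on $\bC^d/K$, which is well defined since $Y(K)=0$. As $X$ ranges over all maps with image in $W$, the restrictions $X|_W$ give all of $\End(W)$. As $Y$ ranges over all maps with $Y(K)=0$, the induced maps $\bar Y$ give all of $\End(\bC^d/K)$. Taking spans and using linearity of $\tau$ on $S^2$, we get $\tau(S^2)=\End(W)\otimes\End(\bC^d/K)=\End(E)$.
\end{itemize}

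\textbf{Expected obstacle.} The step needing the most care is the identification of the two spans $V\Mat_d(\bC)$ and $\Mat_d(\bC)V$ with the full right and left ideals. Once these are established, and together with the invariance of $U$ and $N$ under odd as well as even products, everything else is formal. The hypothesis $\dim V\leq d$ does not seem to be needed here. Only $V\neq0$ matters, and it guarantees that $E\neq0$ (indeed $W\neq0$ and $K\neq\bC^d$).
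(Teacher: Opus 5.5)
Your proof is correct and is essentially the paper's argument. It uses the same subspaces ($\operatorname{span}\bigcup_{A\in V}\im(A)$ and $K=\bigcap_{A\in V}\ker(A)$), the same identification of $V\Mat_d(\bC)$ and $\Mat_d(\bC)V$ with the full right and left ideals, and the same observation that every element of $\bigcup_k S^k$ has image in $W\otimes\bC^d$ and vanishes on $\bC^d\otimes K$. The difference is only presentational: you phrase the block structure through the flag $W\otimes K\subseteq W\otimes\bC^d$ and induced quotient maps, while the paper writes out the corresponding four-block matrix. Choosing your complement $E$ to be orthogonal recovers exactly the paper's $E=U\otimes K^{\perp}$ (your $W$ is the paper's $U$) and its map $\tau(M)=P_E M|_E$.
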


\noindent
Given these technical lemmas, we are ready to derive the proof of \cref{thm:no-one-round-symmetry-breaking}.

\begin{proof}[Proof of \Cref{thm:no-one-round-symmetry-breaking}]
    The proof is by contradiction. Suppose there exists an algorithm $\algA = (U_0, U_1, \Pi)$ with output labels $\rSigma = [c]$ for some $c\in \bN$ and no monochromatic $(n-2)$-block in its support. Denote the local state of every vertex before the communication round by $u = U_0 \ket{ 000 }$. In particular, the global state of after the communication round is $\rho = C \ket{u}^{\otimes n} \bra{u}^{\otimes n} C^\dagger$. 
    Let $V$ and $L$ be as described by \cref{lem:encoding} and $\tau$ the map given by \cref{lem:block}. One can absorb the unitary $U_1$ into the measurement by replacing each $\Pi_a$ with $U_1^\dagger \Pi_a U_1$; so we henceforth assume that $U_1$ is the identity.
    
    Since $\algA$ has no monochromatic $(n-2)$-block in its support, the POVM $\Pi = (\Pi_1, \ldots, \Pi_c)$ is such that $\tr{ (\Pi_a^{\otimes n-2} \otimes I \otimes I) \cdot \rho } = 0$ for all $a\in [c]$. 
    Using that the $\Pi_a$ are positive semi-definite matrices, we have that $(\Pi_a^{\otimes n-2} \otimes \One \otimes \One) C\ket{ u, \ldots, u }  = 0$ for all $a\in [c]$.
    In particular for all $x, y, z\in H$, we have that
    \begin{align}
        \label{eq:zero-trace}
        \begin{split}
    \tr{ L(\Pi_a^\dagger x)^{n-2} L(y) L(z) }
    &= \bra{ x\Pi_a, \ldots, x\Pi_a,  y, z } C \ket{ u, \ldots, u }  \\
    &= \bra{ x, \ldots x, y, z } ( \Pi_a^{\otimes n-2} \otimes \One \otimes \One) C \ket{ u, \ldots, u }  = 0 \ .
        \end{split}
    \end{align}

    Let us now deduce that $\tr{ L(x) } = 0$ for all $x\in H$.
    Fix a color $a\in[c]$ and $x\in H$.
    If $\tau(L(\Pi_a^\dagger x))^{n-2} \neq 0$, there exists a matrix $X$ for which $\tr{ \tau(L(\Pi_a^\dagger x))^{n-2} X } \neq 0$. By \ref{part:tau-surj}, there exists $M \in S^2$ such that $\tau(M) = X$. Therefore, 
    \begin{align*}
        \tr{ L(\Pi_a^\dagger x)^{n-2} M } 
        &\stackrel{\text{\ref{part:tau-trace}}}{=} \tr{ \tau( L(\Pi_a^\dagger x)^{n-2} M ) } \\
        &\stackrel{\text{\ref{part:tau-mul}}}{=} \tr{ \tau( L(\Pi_a^\dagger x) )^{n-2} \tau( M ) }  \\
        &= \tr{ \tau( L(\Pi_a^\dagger x) )^{n-2} X } \neq 0 \ . 
    \end{align*}
    Rewrite $M = \sum_i \lambda_i L(y_i) L(z_i)$ with $y_i, z_i\in H$ as $M \in S^2$. By additivity and \cref{eq:zero-trace}, we have 
    \[ 
    \tr{L(\Pi_a^\dagger x)^{n-2} M} 
    = \tr*{ \sum_i \lambda_i L(\Pi_a^\dagger  x)^{n-2} L(y_i) L(z_i) }
    = \sum_i \lambda_i \tr*{ L(\Pi_a^\dagger  x)^{n-2} L(y_i) L(z_i) }
    = 0 \ ,
    \]
    which is absurd. Hence, it must be that $\tau(L(\Pi_a^\dagger  x))^{n-2} = 0$ and, in particular, $\tr{ \tau( L(\Pi_a^\dagger  x) ) } = 0$ because any nilpotent matrix has trace zero\footnote{If $A^k = 0$, then all eigenvalues $\lambda$ of $A$ are such that $\lambda^k = 0$ because $A^k v = \lambda^k v$ for $v$ a corresponding eigenvector. Hence, all eigenvalues of $A$ are zero, hence $\tr{A} = 0$}. By \ref{part:tau-trace} we get that $\tr{ L(\Pi_a^\dagger  x) } = \tr{ \tau(L(\Pi_a^\dagger  x)) } = 0$. 
    It follows that $\tr{L(x)} = 0$ for all $x\in H$ by additivity as
    \[
    \tr{ L(x) } = \tr*{ L\paren*{ \sum_{a\in [c]} \Pi_a^\dagger  x } }
    = \sum_{a\in[c]} \tr*{ L(\Pi_a^\dagger  x) }
    = 0 \ .
    \]

    To conclude the proof, pick any $T \in V \setminus \set{0}$, which exists because $\dim V > 0$ from \cref{lem:encoding}. Since $T \neq 0$, there exists $i, j$ such that $T_{i,j} \neq 0$. Now 
    \[ 
    \tr{(T \otimes E_{j,i})P} 
    = \tr{ T E_{j,i} } = T_{i,j} \neq 0 \ ,
    \]
    which leads to a contradiction because $(T \otimes E_{j,i})P \in \im(L)$ by \cref{lem:encoding} and thus should have trace zero.
\end{proof}

\subsection{Proof of \cref{lem:encoding}}
\label{sec:proof-lem-encoding}

In \Cref{lem:encoding}, the representation of the amplitudes of $C\ket{u,\ldots,u}$ is a Matrix Product State (MPS) representation \cite{fannes-nachtergaele-werner-1992-finitely-correlated,perez-garcia-verstraete-etal-2007-matrix-product-state}, which is extensively studied in condensed matter physics and quantum information theory \cite{cirac-perez-garcia-etal-2021-matrix-product-states-and}.
To the best of our knowledge, this is the first application to distributed quantum algorithms.
However, this representation alone does not suffice for our argument; the crucial point is that, for states generated by one round of communication, the map $L$ can be chosen with the specific image structure described in \Cref{lem:encoding}.
The proof is provided here in full details for completeness.

Let us restate the lemma.
\LemEncoding*

\begin{proof}

For clarity, let $\rL$, $\rM$ and $\rR$ be three $d$ dimensional Hilbert spaces and $H = \rL \otimes \rM \otimes \rR$.
Using the Schmidt decomposition (see, e.g., \cite[Eq. 1.164]{watrous-2018-the-theory-of-quantum-information}) of $\ket{u}$ w.r.t.\ $\rM$ and $\rL \otimes \rR$, we get an \emph{orthonormal} family $\set{m_1, \ldots, m_r} \subseteq \rM$ and an \emph{orthogonal} $\set{T_1, \ldots, T_r} \subseteq \rL \otimes \rR$ for which 
\[
\ket{ u } = \sum_{i = 1}^r \ket{ m_i }_{\rM} \otimes \ket{ T_i }_{\rL,\rR}  \ .
\]
Define $V = \operatorname{span}\set{ T_1, \ldots, T_r }$, which we see as a subspace of $\Mat_d(\bC)$ of dimension at most $r \leq d$. For each $i \in [r]$ and vector $\ket{v}\in H$, define $B_i(v) \in \Mat_d(\bC)$ as $(B_i(v))_{a,b} = \braket{ v }{a, i, b}$, where $m_i = \ket{i}_{\rM}$ for succintness. Now we can define the map $L$ as 
\[
L(v) =  \paren*{ \sum_{i = 1}^r T_i \otimes B_i(v) } P \ ,
\]
which is clearly conjugate-linear and $\im(L) \subseteq (V \otimes \Mat_d(\bC)) P$. Conversely, the orthonormality of the $m_i$ gives that $B_i(v) = E_{j,k}$ when $v = \ket{j, i, k}$. Since the $T_i$ span $V$ and the $E_{j,k}$ span $\Mat_d(\bC)$, we get that $\im(L) = (V \otimes \Mat_d(\bC)) P$. So what remains to prove is the trace property.

First, rewrite $\ket{u} = \sum_i \ket{ m_i }_{\rM} \otimes \ket{ T_i }_{\rL,\rR}$ as an element of $\rL \otimes \rM \otimes \rR$ as follows
\[
\ket{u} = \sum_{i = 1}^r \sum_{j, k \in [d]} (T_i)_{j,k}  \ket{j, i, k}_{\rL, \rM, \rR}\ .
\]
Therefore, if we have $p$ copies of this state $\ket{u}$ in registers $\rL(p), \rM(p),  \rR(p)$ (using cyclic indices $k_0 = k_n$ and $j_{n+1} = j_1$), after the communication round the global state can be written as
\begin{align*}
    C\ket{u, \ldots, u}
    &= C\paren*{ \sum_{i_1, \ldots, i_n\in [r]} \sum_{j_1, \ldots, j_n \in [d]} \sum_{k_1, \ldots, k_n\in [d]} \paren*{ \prod_{p = 1}^n (T_{i_p})_{j_p,k_p} } \bigotimes_{p=1}^n \ket{ j_p, i_p, k_p }_{\rL(p), \rM(p), \rR(p)} } \\
    &= \sum_{i_1, \ldots, i_n\in [r]} \sum_{j_1, \ldots, j_n \in [d]} \sum_{k_1, \ldots, k_n\in [d]} \paren*{ \prod_{p = 1}^n (T_{i_p})_{j_p,k_p} } \bigotimes_{p=1}^n \ket{ k_{p-1}, i_p, j_{p+1} }_{\rL(p), \rM(p), \rR(p)}\\
\end{align*}
Consider any vectors $\ket{ v_1 }, \ldots, \ket{ v_n } \in H$ and note that $\bra{ v_1, v_2, \ldots, v_n} C \ket{ u, u, \ldots, u }$ can be written as
\begin{align*}
\sum_{i_1, \ldots, i_n\in [r]} &\sum_{j_1, \ldots, j_n \in [d]} \sum_{k_1, \ldots, k_n\in [d]} \paren*{ \prod_{p = 1}^n (T_{i_p})_{j_p,k_p} } \bigotimes_{p=1}^n \braket{ v_p }{k_{p-1}, i_p, j_{p+1}}_{\rL(p),\rM(p),\rR(p)} \\
&= \sum_{i_1, \ldots, i_n\in [r]} \sum_{j_1, \ldots, j_n \in [d]} \sum_{k_1, \ldots, k_n\in [d]} \prod_{p = 1}^n (T_{i_p})_{j_p,k_p}  (B_{i_p}(v_p))_{k_{p-1}, j_{p+1}}
\end{align*}
Let us assume for now that $n$ is even; the case of $n$ odd is almost identical and described later.
The key observation to understand this state is that (for $n$ even) the product can be rearranged as
\begin{align*}
\underbrace{\paren*{ (T_{i_1})_{j_1,k_1} (B_{i_2}(v_2))_{k_1,j_3} \ldots 
(T_{i_{n-1}})_{j_{n-1},k_{n-1}} (B_{i_n}(v_n))_{k_{n-1},j_{1}} }}_{\text{all the } j_p, k_p\text{ with odd } p}
\cdot \\
\underbrace{\paren*{ (T_{i_2})_{j_2,k_2} (B_{i_3}(v_3))_{k_2,j_4} \ldots
(T_{i_n})_{j_n,k_n} (B_{i_1}(v_1))_{k_n,j_2} }}_{\text{all the } j_p, k_p\text{ with even } p}
\end{align*}
Therefore $\bra{ v_1, v_2, \ldots, v_n} C \ket{ u, u, \ldots, u }$ factorizes as
\begin{align*}
\sum_{i_1,\ldots,i_n\in[r]} 
&\paren*{ \sum_{j_p, k_p \text{ with $p$ odd}} (T_{i_1})_{j_1,k_1} (B_{i_2}(v_2))_{k_1,j_3} 
\ldots
(T_{i_{n-1}})_{j_{n-1},k_{n-1}} (B_{i_n}(v_n))_{k_{n-1},j_{1}}
} 
\cdot \\
&\paren*{ \sum_{j_p, k_p \text{ with $p$ even}} (T_{i_2})_{j_2,k_2} (B_{i_3}(v_3))_{k_2,j_4} 
\ldots
(T_{i_n})_{j_n,k_n} (B_{i_1}(v_1))_{k_n,j_2} }
\end{align*}
Since each term of the product is a trace, $\bra{ v_1, \ldots, v_n} C \ket{ u, \ldots, u }$ is
\begin{align*}
\sum_{i_1,\ldots,i_n\in[r]} 
\tr{ T_{i_1} B_{i_2}(v_2) \ldots T_{i_{n-1}} B_{i_n}(v_n) }
\tr{ B_{i_1}(v_1) T_{i_2} \ldots B_{i_{n-1}}(v_{n-1}) T_{i_n} }
\end{align*}
Using that $(A \otimes B)P (C \otimes D)P = AD \otimes BC$, one can prove by induction that
\begin{align*}
\tr*{ \prod_{p=1}^n L(v_p) }
&= \sum_{i_1,\ldots,i_n\in[r]} \tr*{ \prod_{p=1}^n (T_{i_p} \otimes B_{i_p}(v_p))P  }\\
&= \sum_{i_1,\ldots,i_n\in[r]} \tr*{  T_{i_1} B_{i_2}(v_2) \ldots T_{i_{n-1}} B_{i_n}(v_n) 
\otimes 
B_{i_1}(v_1) T_{i_2} \ldots B_{i_{n-1}}(v_{n-1}) T_{i_n} } \\
&= \bra{ v_1, v_2, \ldots, v_n} C \ket{ u, u, \ldots, u } \ ,
\end{align*}
which concludes the proof of \cref{lem:encoding} for even values of $n$. For $n$ odd, the analysis is almost identical except that one obtains the following single trace
\[
\sum_{i_1, \ldots, i_n\in [r]} \tr{ T_{i_1} B_{i_2}(v_2) \ldots T_{i_{n}} B_{i_1}(v_1) \cdot T_{i_2} B_{i_3}(v_3) \ldots B_{i_n}(v_n) } \ .
\]
One can then prove by induction that $\tr*{ \prod_{p=1}^n L(v_p) }$ has this form when $n$ is odd, thereby proving \cref{lem:encoding} for odd values of $n$.
\end{proof}

\subsection{Proof of \cref{lem:block}}
\label{sec:proof-block}
Now we prove that the image of $L$ has the claimed block structure.
For intuition, suppose that $V$ is a matrix space of dimension one, spanned by $A$. For any $M\in\Mat_{d}(\bC)$ and $\ket{x,y} \in \bC^{d} \otimes \bC^{d}$, we have that $(A \otimes M)P\ket{x,y} = A\ket{y} \otimes M\ket{x}$. In particular, when $y\in \ker(A)$ this is always zero, and the first factor of the tensor always belong to $\im(A)$. \cref{lem:block} extends this reasoning to the general case by using the sum of the $\im(A)$ and the intersection of the $\ker(A)$.

\LemBlock*

\begin{proof}
Define 
\[
U = \sum_{A \in V} \im(A) \ ,\quad
K = \bigcap_{A \in V} \ker(A)
\quad\text{and}\quad
W = K^\bot \ .
\]
Let $p$ and $q$ be the orthogonal projections on $U$ and $W$, respectively.
The first part is to prove that
\begin{align}
    \label{eq:spaces}
\begin{split}
    V \Mat_d(\bC) &= \set{ T \in \Mat_d(\bC): \im(T) \subseteq U } = p \Mat_d(\bC)\\
    \Mat_d(\bC) V &= \set{ T \in \Mat_d(\bC): T_{|W^\bot} = 0 } = \Mat_d(\bC) q \ .
\end{split}
\end{align}

To prove one inclusion of the first equality, consider $A\in V$ and $M\in\Mat_d(\bC)$. Clearly, $\im(AM) \subseteq \im(A) \subseteq U$. Hence, for all $\ket{ x }\in \bC^d$, $AM\ket{ x } \in U$. Since $p$ is a projection on $U$, it follows that $AM\ket{x} = pAM \ket{x}$, meaning that $AM = p AM$. This proves that $V\Mat_d(\bC) \subseteq p\Mat_d(\bC)$.

To prove the other inclusion, fix an orthonormal basis $\ket{ u_1 }, \ldots, \ket{ u_r }$ of $U$ and recall that $p = \sum_i \ketbra{ u_i }{ u_i }$. By definition, for each $\ket{ u_i }$ there exists matrices $A_{i,1},\ldots, A_{i,m_i} \in V$ and vectors $\ket{ x_{i,1} }, \ldots, \ket{ x_{i,m_i} } \in \bC^d$ such that $\ket{ u_i } = \sum_j A_{i,j} \ket{ x_{i,j} }$. For a matrix $M \in \Mat_d(\bC)$, observe that $\ketbra{ u_i }{ u_i } M = \sum_j A_{i,j} \ketbra{ x_{i,j} }{ u_i } M$ and $\ketbra{ x_{i,j} }{ u_i } M \in \Mat_d(\bC)$. Hence, $\ketbra{ u_i }{ u_i } M \in V \Mat_d(\bC)$, and $pM \in V\Mat_d(\bC)$ follows as the sum of such terms.

To prove the second equality, let $A \in V$ and $M\in \Mat_d(\bC)$. Clearly, $MA\ket{x} = 0$ for all $\ket{x}\in K$. By rewriting $\ket{x} = q\ket{x} + (\One - q)\ket{x}$, this means $MA \ket{x} = MAq\ket{x}$ because $(\One-q)\ket{x} \in K$. Hence, $\Mat_d(\bC) V \subseteq \Mat_d(\bC)q$, proving one direction of the equality.

To prove the other inclusion, let us first observe that $W = R := \sum_{A \in V} \im(A^\dagger)$. To see why, observe that $\ket{x}\in R^\bot$ if and only if $\langle x, A^\dagger y \rangle = 0$ for all $A\in V$ and $\ket{y}\in \bC^d$, which we can rewrite as $\langle Ax, y \rangle = 0$ for all $A\in V$ and $\ket{y}\in \bC^d$. Hence, $\ket{x} \in R^\bot$ is equivalent to $A\ket{x} = 0$ for all $A\in V$, which is equivalent to $\ket{x}\in \bigcap_{A \in V}\ker(A) = K$. Now, let $\ket{w_1}, \ldots, \ket{w_r}$ be an orthonormal basis of $W$. For all $\ket{w_i}$, there exists $A_{i,1}, \ldots, A_{i,m_i} \in V$ and vectors $\ket{x_{i,1}}, \ldots, \ket{x_{i,m_i}} \in \bC^d$ such that $\ket{w_i} = \sum_j A_{i,j}^\dagger \ket{x_{i,j}}$, whose adjoint is $\bra{w_i} = \sum_j \bra{x_{i,j}} A_{i,j}$. For a matrix $M \in \Mat_d(\bC)$, then $M \ketbra{ w_i }{ w_i } = \sum_j M \ketbra{ w_i }{ x_{i,j} } A_{i,j}$. Since $M \ketbra{ w_i }{ x_{i,j} } \in \Mat_d(\bC)$, we get that $M \ketbra{ w_i }{ w_i } \in \Mat_d(\bC)V$, and $Mq \in \Mat_d(\bC)V$ follows as the sum of such terms.

The key observation is that, with respect to the decomposition
\[
\bC^d \otimes \bC^d = (U \oplus U^\bot) \otimes (W \oplus W^\bot) = (U \otimes W) \oplus (U \otimes W^\bot) \oplus (U^\bot \otimes W) \oplus (U^\bot \otimes W^\bot) \ ,
\]
every $M \in \mathcal{A} := \bigcup_{k \geq 1} S^k$ has block form
\[
M = \left[
        \begin{array}{cccc}
            \tau(M) & 0 & * & 0 \\
            * & 0 & * & 0 \\
            0 & 0& 0 & 0 \\
            0 & 0& 0 & 0 
        \end{array} 
    \right]
    \quad\text{where}\quad
    \tau(M) = P_E M_{|E}
    \quad\text{and}\quad
    E = U \otimes W \ .
\]
Before we prove it, let us argue that \cref{lem:block} follows directly from this. Indeed, \ref{part:tau-mul} and \ref{part:tau-trace} are direct from the block decomposition. To prove \ref{part:tau-surj}, fix $X \in \End(U)$ and $Y \in \End(W)$ arbitrarily and define $\widetilde{X}$ and $\widetilde{Y}$ that extends them on $\bC^d$ with zero on $U^\bot$ and $W^\bot$ respectively. Since $\im(\widetilde{X}) \subseteq U$ and $\widetilde{Y}_{|K} = 0$, by \cref{eq:spaces}, we have that $\widetilde{X} \in V\Mat_d(\bC)$ and $\widetilde{Y} \in \Mat_d(\bC)V$. This means that $\widetilde{X} \in V\Mat_d(\bC)$ and $\widetilde{Y} \in \Mat_d(\bC) V$, and thus that $\widetilde{X} \otimes \widetilde{Y} \in V\Mat_d(\bC) \otimes \Mat_d(\bC)V$. This space is equal to $S^2$. Indeed, $S^2$ is spanned by elements of the form $(A\otimes B)P (C \otimes D)P = AD \otimes BC$, where $A,C\in V$ and $B,D\in \Mat_d(\bC)$, thus $S^2 \subseteq V\Mat_d(\bC) \otimes \Mat_d(\bC)V$. Conversely, any simple tensor of the form $AD \otimes BC$ is an element of $S^2$, which implies that $V\Mat_d(\bC) \otimes \Mat_d(\bC)V \subseteq S^2$ by linearity. By construction, on the subspace $E=U\otimes W$, the map $\widetilde{X} \otimes \widetilde{Y}$ acts as $X \otimes Y$, hence $\tau(\widetilde{X}\otimes \widetilde{Y}) = X \otimes Y$. In finite dimension, $T \in \End(U \otimes W)$ can be written as the sum of $X_i \otimes Y_i$ with $X_i \in \End(U)$ and $Y_i \in \End(W)$, hence \ref{part:tau-surj} follows from linearity.

To conclude, it remains to argue about matrices of $\mathcal{A}$ having the claimed block form. It suffices to prove it for matrices $M \in S$ since products of matrices of this form remain of such form. Given the definition of $M$, there exists matrices $A \in V$ and $B\in \Mat_d(\bC)$ for which $M = (A \otimes B) P$. For a vector $\ket{x} \otimes \ket{y} \in \bC^d \otimes \bC^d$, we have $M(\ket{x} \otimes \ket{y}) = A\ket{y} \otimes B\ket{x} \in \im(A) \otimes \bC^d \subseteq U \otimes (W \oplus W^\bot)$. This implies that the last two rows of the block matrix of $M$ are zeros. On the other hand, $A\ket{y} = 0$ when $\ket{y}\in W^\bot$, so $M(\ket{x} \otimes \ket{y}) = 0$ when $\ket{x} \otimes \ket{y}\in \bC^d \otimes W^\bot = (U \oplus U^\bot) \otimes W^\bot$, which implies that the second and fourth column of the block matrix of $M$ are only zeros. This concludes the proof of \cref{lem:block}.
\end{proof}
 \section{Round Reduction (\cref{thm:reduction})}
\label{sec:round-reduction}

In this section, we prove our second main technical contribution.
\ThmReduction*

The mechanism behind algorithm $\algB$ is that each vertex locally instantiates $T$ maximally entangled states with local dimension $d$ for each of its two neighbors, and uses the unique communication round to send the halves of each of those entangled states to the respective neighbor. Each pair of neighbouring vertices thus shares $2T$ maximally entangled states.
Using these states, nodes can attempt to simulate the original $T$-round algorithm $\algA$ through wishful teleportation (see \Cref{fig:teleport-keys}).
At the end of $\algB$, each vertex outputs the outcome of the attempted simulation of $\algA$ together with all generated keys used for wishful teleportation.

\subsection{Background on Teleportation}
\label{sec:teleportation}

Suppose we have a state $\ket{ \psi }_{\rE\rX}$, where $\rE$ is an environment register and $\rX$ is a register held by some node. We wish to turn this state into $\ket{ \psi }_{\rE\rZ}$ where $\rZ$ is held by a different node but without using quantum communication.
This can be achieved with quantum teleportation \cite{bennett-brassard-etal-1993-teleporting-an-unknown}.
Fix the standard maximally entangled vector $\ket{\Phi} = d^{-1/2}\sum_{i=1}^d \ket{ii}$ in $\bC^d\otimes\bC^d$.
Let $\{P_k:k\in[d^2]\}$ be an orthonormal unitary basis and $\ket{\Phi_k}=(\One \otimes P_k^\top)\ket{\Phi}$ the associated maximally entangled basis.
With these conventions, the teleportation identity is
\begin{equation}\label{eq:teleport}
\ket{ \psi }_{\rE\rX} \otimes \ket{ \Phi }_{\rY\rZ}
    = \frac{1}{d} \sum_{k \in [d^2]} \ket{ \Phi_{k} }_{\rX\rY} \otimes (\Id_{\rE} \otimes P_k^\dagger) \ket{ \psi }_{\rE\rZ} \ .
\end{equation}
Suppose registers $\rX\rY$ are measured in the maximally entangled basis, yielding an outcome $k \in [d^2]$.
Then, by applying the correction $P_k$ on register $\rZ$, we recover the state $\ket{ \psi }_{\rE\rZ}$, provided $k$ is known to the party holding $\rZ$.
The explicit environment will be important in our proof, and we emphasize that when the state is teleported, correlations between $\rX$ and $\rE$ are preserved between $\rZ$ and $\rE$.
This is detailed further in \cref{app:teleportation}.

Now consider the case of a chain of vertices, each holding local registers $\rX(p)$, $\rY(p)$, $\rZ(p)$ for $p \in [n]$.
Assume each pair of registers $\rY(p)$ and $\rZ(p+1)$ hold a $\ket{\Phi}$ state, and that the vertices simultaneously teleport the state in their registers $\rX(p)$ to registers $\rZ(p+1)$, for $p \in [n - 1]$.
The next lemma states that the overall effect is a relabelling of the $\rX(1)\ldots\rX(n-1)$ registers to $\rZ(2)\ldots\rZ(n)$.
See \cref{app:chain-teleportation} for a proof.

\begin{restatable}{lemma}{LemChainTeleport}\label{lem:chain-teleport}
Consider $n \geq 2$ vertices in a chain, with each vertex $p \in [n]$ holding three $d$-dimensional registers $\rX(p)$, $\rY(p)$, $\rZ(p)$.
For each $p \in [n-1]$, the registers $\rY(p)$ and $\rZ(p+1)$ share a maximally entangled state $\ket{\Phi}_{\rY(p) \rZ(p+1)}$.

Let $\rE$ be a finite-dimensional environment and let $\rho$ be a state on $\rX(1)\rX(2)\cdots\rX(n-1)\rE$. Suppose that each vertex $p \in [n - 1]$ measures $\rX(p) \rY(p)$ in the maximally entangled basis, obtains outcome $k_p$, and the corresponding correction $P_{k_p}$ is applied on $\rZ(p + 1)$.
Then the resulting state on $\rZ(2)\rZ(3)\cdots\rZ(n)\rE$ is the reduced state obtained from $\rho$ by relabelling each $\rX(p)$ as $\rZ(p+1)$, for $p \in [n - 1]$.
\end{restatable}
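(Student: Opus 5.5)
The plan is an induction on the number of teleportation steps, using the single-step identity \cref{eq:teleport} with an enlarged environment. First we reduce to pure states. Purify $\rho$ to a vector $\ket{\psi}$ on $\rX(1)\cdots\rX(n-1)\rE\rE'$. Since the measurements and corrections never touch $\rE'$, tracing out $\rE'$ at the end gives the claim for $\rho$. This uses only linearity of the operations and the fact that the partial trace commutes with operations on other registers. So we may assume $\rho=\ketbra{\psi}{\psi}$.

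The key step is to process the vertices in order $p=1,2,\ldots,n-1$. The measurements and corrections act on disjoint registers, so they commute, and the order in which they are performed does not affect the joint post-measurement state or the outcome distribution.

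For a fixed $p$, apply \cref{eq:teleport} with:
\begin{itemize}
\item $\rX=\rX(p)$;
\item $\rY=\rY(p)$ and $\rZ=\rZ(p+1)$;
\item environment $\rE_p$ consisting of $\rE$, $\rE'$, the remaining untouched registers $\rX(p+1),\ldots,\rX(n-1)$, the already produced registers $\rZ(2),\ldots,\rZ(p)$, and the maximally entangled pairs not yet used.
\end{itemize}
The identity shows that outcome $k_p$ occurs with probability $1/d^2$, independently of the state. It also shows that the post-measurement state on $\rE_p\rZ(p+1)$ is $(\Id\otimes P_{k_p}^\dagger)\ket{\psi}$, with $\rX(p)$ relabelled as $\rZ(p+1)$, and the measured registers are left in the product state $\ket{\Phi_{k_p}}$. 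Applying $P_{k_p}$ on $\rZ(p+1)$ then restores exactly the relabelled state, tensored with $\ket{\Phi_{k_p}}_{\rX(p)\rY(p)}$.

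The induction hypothesis after step $p$ is that, conditioned on any outcomes $k_1,\ldots,k_p$, the state equals
\[
\Bigl(\bigotimes_{q\leq p}\ket{\Phi_{k_q}}_{\rX(q)\rY(q)}\Bigr)\otimes\ket{\psi'}\otimes\bigotimes_{q>p}\ket{\Phi}_{\rY(q)\rZ(q+1)} \ ,
\]
where $\ket{\psi'}$ is $\ket{\psi}$ with $\rX(q)$ relabelled as $\rZ(q+1)$ for all $q\leq p$. Each step is one application of the identity. Since the conditional state does not depend on the outcomes, averaging over them leaves it unchanged. Tracing out the measured registers and $\rE'$ then yields the statement.

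The main obstacle is checking that \cref{eq:teleport} really holds with an arbitrary, large environment that includes later $\rX$ registers, earlier $\rZ$ registers and unused entangled pairs. This requires verifying the identity with the convention $\ket{\Phi_k}=(\One\otimes P_k^\top)\ket{\Phi}$ via the transpose trick $(A\otimes\One)\ket{\Phi}=(\One\otimes A^\top)\ket{\Phi}$, and confirming that the correction $P_k$ (not $P_k^\dagger$ or a conjugate) is the right inverse under that convention. This computation is routine: expand $\ket{\psi}$ in a basis of $\rX$, expand $\ket{\Phi}$, and use completeness $\sum_k\ketbra{\Phi_k}{\Phi_k}=\One$ on $\rX\rY$ together with orthonormality of the $P_k$. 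Once the single-step identity is established for arbitrary $\rE$, the chain version is purely bookkeeping.
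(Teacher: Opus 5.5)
Your proposal is correct and follows essentially the same route as the paper: you apply the single-step teleportation identity \cref{eq:teleport} vertex by vertex, absorbing everything else (the later $\rX$ registers, the already-produced $\rZ$ registers, the unused pairs) into the environment, then condition on the outcome, apply the correction $P_{k_p}$, and trace out the measured registers. The only cosmetic difference is that you reduce to pure states by purifying $\rho$, whereas the paper writes $\rho$ spectrally as a mixture $\sum_i \lambda_i \ketbra{u_i}{u_i}$ and uses linearity.
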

 
\subsection{The Simulation Algorithm}
We refer to $\algB$ from \cref{thm:reduction} as the simulation algorithm and to $\algA$ as the original algorithm.
Let us now describe the simulation algorithm formally. In \cref{item:sim-prep}, nodes prepare maximally entangled states in registers $\rT$ (indexed by the round and direction in which they are used), and they share half of each maximally entangled state with their neighbors during \cref{item:sim-com}. Note that \cref{item:sim-com} is the unique communication round of this algorithm. \cref{item:wishful-teleportation-in-sim-algorithm} implements $T$ rounds of wishful teleportation (see \Cref{fig:teleport-keys} for an illustration).
\cref{item:sim-measurement,item:sim-output} perform the final measurement and produce the final output of algorithm $\algB$.

{
\renewcommand{\themdalg}{$\algB$}
\begin{Algorithm}
    \label{alg:simulator}
    One-Round Simulation of a $T$-round algorithm $\algA = (U_0, U_1, \ldots, U_T, \Pi)$

    \medskip
    \begin{enumerate}[(I)]
        \item \label{item:sim-prep}
        Each vertex $p$ prepares registers $\rL_1(p) \otimes \rM(p) \otimes \rR_1(p)$ in the state $U_0 \ket{ 000 }$, registers $\rL_{t}(p)$, $\rR_{t}(p)$ for $t \in [2, T+1]$ in the state $\ket{ 0 }$, and the registers
        \[
        \rT_{t,L}(p) \otimes \rT_{t,L}'(p)
        \quad\text{and}\quad
        \rT_{t,R}(p) \otimes \rT_{t,R}'(p)
        \]
        in the maximally entangled state $\ket{ \Phi }$ for all $t \in [T]$.
        \item \label{item:sim-com}
        Using \emph{one communication round}, for all $p$ and $t\in[T]$ swap
        \[
        \rT_{t,L}'(p) \;\leftrightarrow\; \rR_{t+1}(p-1)
        \quad\text{and}\quad
        \rT_{t,R}'(p) \leftrightarrow \rL_{t+1}(p+1) .
        \]
        \item \label{item:wishful-teleportation-in-sim-algorithm} For $t \in [T]$ times and \emph{with no communication}, each vertex $p$:
        \begin{enumerate}[(i)]
            \label[step]{step:sim-round}
            \item \label[step]{step:sim-teleport-L}
            measures $\rL_{t}(p) \otimes \rT_{t,L}(p)$ in the basis $\set{ \Phi_{k}, k \in [d^2] }$, obtains the outcome $k_{t,L}(p) \in [d^2]$ and then applies the unitary $P_{k_{t,L}(p)}$ on $\rR_{t+1}(p)$;
            \item \label[step]{step:sim-teleport-R}
            measures $\rR_t(p) \otimes \rT_{t,R}(p)$ in the basis $\set{ \Phi_{k}, k \in [d^2] }$, obtains the outcome $k_{t,R}(p) \in [d^2]$ and then applies the unitary $P_{k_{t,R}(p)}$ on $\rL_{t+1}(p)$;
            \item \label[step]{step:unitary-B} applies the unitary $U_t$ on $\rL_{t+1}(p) \otimes \rM(p) \otimes \rR_{t+1}(p)$.
        \end{enumerate}
        \item \label{item:sim-measurement}
        Measure $\Pi$ on $\rL_{T+1}(p) \otimes \rM(p) \otimes \rR_{T+1}(p)$ and obtains $a(p) \in \Sigma$ as the outcome.
        \item\label{item:sim-output}
        Vertex $p$ outputs
        \begin{equation}\label{eq:simulation-output}\lambda_p = ( k_{1,L}(p), k_{1,R}(p), \ldots, k_{T,L}(p), k_{T, R}(p), a(p)) .\end{equation}
    \end{enumerate}
\end{Algorithm}
}

Compared to the \Cref{def:q-local} of the \qPN model, \Cref{alg:simulator} uses mid-circuit measurements and unitaries with classical control.
The well-known \emph{principle of deferred measurement} states that measurements can always be moved to the end of a quantum circuit, and that classically controlled operations can be substituted by conditional quantum gates \cite[Sec.\ 4.4]{nielsen-chuang-2010-quantum-computation-and-quantum}.
Therefore, $\algB$ is still a one-round \qPN algorithm on directed cycles, in the same sense as in \Cref{def:q-PN}.

As proven formally below, the simulation algorithm can be understood as having two behaviours. First, suppose the keys obtained in \Cref{item:wishful-teleportation-in-sim-algorithm} are such that
\begin{equation}
    \label{eq:assumption-eq}
    k_{t,L}(p) = k_{t,L}(p + 1)
    \quad\text{and}\quad
    k_{t,R}(p) = k_{t,R}(p - 1)
\end{equation}
for every $t \in [T]$.
Then, the corrections $P_{k_{t, L}(p)}$ and $P_{k_{t, R}(p)}$ applied by vertex $p$ to the registers $\rR_{t+1}(p)$ and $\rL_{t+1}(p)$ lead to successful teleportations.
Suppose this holds for all adjacent vertices inside an interval $J_s = [s, s + 2T + r - 1]$, where $s \in [n]$ and the interval is understood cyclically.
Then $\algB$ correctly simulates $\algA$ on the interval $[s + T, s + T + r - 1]$, which is at the center of $J_s$ and has length $r$.
Since $\algA$ by assumption has no monochromatic $r$-block, the $a(p)$ part of the outputs of $\algB$ on this central interval cannot be monochromatic.
The second behaviour happens if, for some $t \in [T]$, \Cref{eq:assumption-eq} fails for some pair of vertices inside $J_s$.
Then two adjacent vertices in $J_s$ have different wishful teleportation keys.
Since these keys are included in the output of $\algB$, the interval $J_s$, which has length $2T + r$, cannot be monochromatic.

\subsection{Proof of \cref{thm:reduction}}
Let $s \in [n]$ and recall that $\algA$ is assumed to have no monochromatic $r$-block in its support.
Fix an interval
\[
J_s = [s,s+2T+r-1]
\]
of $2T+r$ consecutive vertices where $T \leq \lfloor \frac{n - r}{2} \rfloor$, hence the interval does not overlap with itself. 
We prove that $J_s$ is monochromatic at the end of $\algB$ with probability zero: 
for all output labels
\[
\lambda=( \underbrace{\kappa_{1,L},\kappa_{1,R},\ldots, \kappa_{T,L},\kappa_{T,R}}_{\kappa},a)\in([d^2])^{2T}\times\Sigma 
\]
the probability that $\lambda_p = \lambda$ for every $p \in J_s$ is zero (recall that $\lambda_p$ is the output of nodes $p$ in $\algB$, see \cref{eq:simulation-output}).
A union bound on $s$ then implies that none of the $J_s$ are monochromatic w.p.\ one, hence that $\algB$ has no monochromatic $(2T+ r)$-block w.p.\ one.

For $0\leq t\leq T$, define
\[
I_s{(t)}=[s+t,s+2T+r-1-t].
\]
At each round, $I_s{(t)}$ shrinks by removing one boundary vertex from each side.
In particular, $I_s(0) = J_s$, and $I_s(T)$ has length $r$.

Let $C_{\kappa,t}(J_s)$ be the event that, for all $1 \leq \tau \leq t$ and all $p \in I_s(\tau)$,
\begin{equation}
    k_{\tau,L}(p) = k_{\tau, L}(p + 1) = \kappa_{\tau, L}
    \quad\text{and}\quad
    k_{\tau,R}(p) = k_{\tau,R}(p - 1) = \kappa_{\tau, R}\,.
\label{eq:def-event-equal-keys}
\end{equation}
Equivalently, in each of the first $t$ rounds, the keys needed to simulate the teleportations into all vertices $p \in I_s(\tau)$ are equal to the corresponding entries of $\kappa$.
We also set $C_{\kappa, 0}(J_s)$ to be the trivial event.

Suppose that the outputs of the nodes in $J_s$ are monochromatic, that is, $\lambda_p = \lambda = (\kappa, a)$ for every $p \in J_s$.
Then each teleportation key in the output of each $p \in J_s$ is equal to the corresponding component of $\kappa$.
Moreover, for every $1 \leq \tau \leq T$ and every $p \in I_s(\tau)$, we have $p-1, p, p+1 \in I_s(\tau - 1) \subseteq J_s$, and thus \Cref{eq:def-event-equal-keys} is satisfied.
Therefore, a monochromatic $J_s$ implies that $C_{\kappa, T}(J_s)$ occurs for some value of $\kappa$.

To rule out this monochromatic output in $J_s$, we will show that after conditioning on $C_{\kappa,T}(J_s)$, the state of $\algB$ in $I_s(T)$ is the same as the state of $\algA$ in this interval.
Then, since by assumption $\algA$ outputs no monochromatic $r$-block, it will follow that $\algB$ cannot have the monochromatic $2T + r$ block $J_s$.

Let $\rho^{(\algA)}_t(I)$ be the reduced state of $\algA$ (see \Cref{def:q-local}) on an interval $I$ after $t$ rounds of $\algA$, which is obtained by tracing out all registers except
\[
\bigotimes_{p\in I}
\rL(p)\otimes\rM(p)\otimes\rR(p) .
\]
Similarly, let $\rho^{(\algB)}_t(I \mid C_{\kappa,t}(J_s))$ be the reduced state of $\algB$ after $t$ simulated rounds, conditioned on $C_{\kappa,t}(J_s)$, on the registers
\[
\bigotimes_{p\in I}
\rL_{t+1}(p)\otimes\rM(p)\otimes\rR_{t+1}(p)\,.
\]
We claim that for every $0 \leq t\leq T$,
\begin{equation}\label{eq:state-induction}
\rho^{(\algB)}_t(I_s(t) \mid C_{\kappa,t}(J_s)) = \rho^{(\algA)}_t(I_s(t))\,.
\end{equation}

The case $t = 0$ holds because $C_{\kappa, 0}(J_s)$ is the trivial event and after tracing out the teleportation registers of $\algB$, it follows from the definition of \Cref{alg:simulator} that both $\algA$ and $\algB$ have the same initial state on the registers being compared.

Then assume \Cref{eq:state-induction} holds for some $t < T$.
We will prove it also holds for round $t + 1$.
Let $I_s(t) = [u,v]$ and $I_s(t + 1) = [u + 1, v - 1]$.
We first analyse the $\rR$ registers, which in $\algA$ are communicated one step clockwise.
In $\algB$, the corresponding transfers happening into the interval $I_s(t + 1)$ are
\[
\rR_{t+1}(p) \longrightarrow \rL_{t+2}(p + 1), \quad p \in [u, v-2]\,.
\]
Let us apply \Cref{lem:chain-teleport} to the interval $[u, v - 1]$.
We identify the registers $\rX, \rY, \rZ$ in \Cref{lem:chain-teleport} with $\algB$'s registers as
\[
    \rX(p) = \rR_{t+1}(p), \quad \rY(p) = \rT_{t+1, R}(p), \quad \rZ(p) = \rL_{t + 2}(p)\,.
\]
Any registers of $\algB$ not treated explicitly are assumed to be included in the environment $\rE$ of \Cref{lem:chain-teleport}.
In $\algB$, after the communication round, the register $\rT_{t + 1, R}(p) = \rY(p)$ is maximally entangled with $\rL_{t + 2}(p + 1) = \rZ(p + 1)$, as required by the lemma for $p \in [u, v - 2]$.
On the event $C_{\kappa, t + 1}(J_s)$, we have
\[
    k_{t+1, R}(p + 1) = k_{t + 1, R}(p) = \kappa_{t + 1, R}, \quad \forall p \in [u, v - 2]\,.
\]
Thus, the unitary $P_{k_{t + 1, R}(p + 1)}$ applied on $\rL_{t+2}(p + 1)$ is the proper teleportation correction operation.
Therefore, by \Cref{lem:chain-teleport}, the overall effect is a relabelling $\rR_{t+1}(p) \longrightarrow \rL_{t + 2}(p + 1)$ for $p \in [u, v - 2]$. 
Since all registers not mentioned above were included in $\rE$, the lemma also guarantees this relabelling preserves any correlations with $\rE$.

The counter-clockwise transfers of the $\rL$ registers are treated analogously, by applying \Cref{lem:chain-teleport} to the interval $[u + 1, v]$ with the opposite ordering.
This gives the relabellings $\rL_{t + 1}(p) \longrightarrow \rR_{t + 2}(p - 1)$ for $p \in [u + 2, v]$.
Together, these two relabellings reproduce the communication step of $\algA$ on $I_s(t+1) = [u+1,v-1]$.
Indeed, for each vertex $p \in I_s(t+1)$, the register $\rL_{t+2}(p)$ comes from $\rR_{t+1}(p-1)$, and the register $\rR_{t+2}(p)$ comes from $\rL_{t+1}(p+1)$, exactly as in one communication round of $\algA$.

Finally, after the communication round $t + 1$, $\algA$ applies $U_{t+1}$ to the three local registers at each vertex.
In $\algB$, this same unitary is applied at \Cref{step:unitary-B} to $\rL_{t+2}(p) \otimes \rM(p) \otimes \rR_{t+2}(p)$, which by the argument above have the same state as the registers of $\algA$ in $I_s(t+1)$.
This shows that, given the event $C_{\kappa,t+1}(J_s)$, 
\[
    \rho^{(\algB)}_{t + 1}(I_s(t + 1) \mid C_{\kappa,t + 1}(J_s)) = \rho^{(\algA)}_{t + 1}(I_s(t + 1))\,.
\]
Therefore, by induction, \Cref{eq:state-induction} holds for $0 \leq t \leq T$.

In particular, \Cref{eq:state-induction} holds for $t = T$, where the interval $I_s(T) = [s + T, s + T + r - 1]$.
Since $I_s(T)$ has length $r$ and $\algA$ has no monochromatic block of length $r$ in its support, then for every $a \in \Sigma$,
\[
    \Trace\bigl(\Pi_a^{\otimes r} \,\rho^{(\algA)}_T(I_s(T)) \bigr) = 0\, ,
\]
where $\Pi_a$ is a measurement effect (cf.~\Cref{def:q-PN}).
Substituting \Cref{eq:state-induction} in the previous equation, 
\[
\Trace\bigl(\Pi_a^{\otimes r} \rho^{(\algB)}_T(I_s(T) \mid C_{\kappa,T}(J_s)) \bigr) = 0\,.
\]
This is the probability that all vertices in $I_s(T)$ output $a$ given that $C_{\kappa,T}(J_s)$ occurs.

Therefore, it is not possible that every vertex in $J_s$ outputs $\lambda$ if $C_{\kappa,T}(J_s)$ occurs.
Otherwise, if $C_{\kappa,T}(J_s)$ does not occur, then, by definition, some teleportation key contained in the output of $\algB$ differs from the corresponding component of $\kappa$ in $J_s$, thus not all vertices in $J_s$ output $\lambda$.
Since $\lambda$ was arbitrary, $J_s$ is monochromatic with probability zero, which concludes the proof.
\qed

\subsection*{Acknowledgements.}
An earlier version of \cref{thm:no-one-round-symmetry-breaking} with $n = 4$ was discovered by OpenAI GPT 5.4 and formalized in Lean using Codex and GPT 5.4. The proof presented here is a slight generalization of the original proof; it was fully written and verified by the authors. LLM tools such as ChatGPT were occasionally used to double-check some proofs. All the material from this paper (references included) was written and verified by humans; LLMs were only used for proofreading.

This work was supported in part by the Natural Sciences and Engineering Research Council of Canada (NSERC) under Grant No.~RGPIN-2025-05422.
This work was supported in part by the Research Council of Finland, Grants 359104 and 363558.
This work was supported in part by the ANR for the JCJC grants LINKS (No.
ANR-23-CE47-0003), the T-ERC QNET (No. ANR24-ERCS-0008), the project QUANTINT, as well as the European Union's Horizon 2020 Research and Innovation Programme under QuantERA Grant Agreements No. 731473 and No. 101017733.

\printbibliography
\appendix

\section{Background}
\label{app:background}

\subsection{Teleportation}
\label{app:teleportation}

For completeness and to set the notation for \Cref{lem:chain-teleport}, we provide a proof of the quantum teleportation identity (\Cref{eq:teleport}) \cite{bennett-brassard-etal-1993-teleporting-an-unknown}.

\begin{proposition}[Teleportation]
    \label{prop:teleportation}
    Let $d \geq 1$. Suppose Alice holds two $d$-dimensional registers $\rX$, $\rY$, Bob holds one $d$-dimensional register $\rZ$ and let $\rE$ be some finite dimensional register. Suppose that the state of $\rX \otimes \rE$ is some pure state $u$ and $\rY \otimes \rZ$ is the maximally entangled pair
    \[
    \ket{ \Phi } = \frac{1}{\sqrt{d}} \sum_{i = 1}^d \ket{ i i } \ .
    \]
    There exists an orthonormal basis $\set{ \ket{ \Phi_k } : k\in [d^2] }$ of $\bC^d \otimes \bC^d$ and a unitary basis $\set{ P_k : k\in [d^2] }$ of $\Mat_d(\bC)$ such that 
    \[
    \ket{ u }_{\rX, \rE} \otimes \ket{ \Phi }_{\rY,\rZ}
    = \frac{1}{d} \sum_{k \in [d^2]} \ket{ \Phi_{k} }_{\rX, \rY} \otimes (P_k^\dagger \otimes \Id_{\rE}) \ket{ u }_{\rZ,\rE} \ .
    \]
\end{proposition}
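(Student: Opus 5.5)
We prove the teleportation identity by a direct computation in coordinates, choosing the basis explicitly. We take $\{P_k : k\in[d^2]\}$ to be the generalized Pauli (Weyl--Heisenberg) operators $X^aZ^b$ for $a,b\in\{0,\dots,d-1\}$, reindexed by $k\in[d^2]$. These are unitary and orthogonal for the Hilbert--Schmidt inner product: $\tr{P_k^\dagger P_{k'}} = d\,\delta_{k,k'}$. We then set $\ket{\Phi_k} = (\One\otimes P_k^\top)\ket{\Phi}$, as in \cref{eq:teleport}.

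\textbf{Orthonormality of $\{\ket{\Phi_k}\}$.} We use the standard identity $\bra{\Phi}(A\otimes B)\ket{\Phi} = \frac1d\tr{A^\top B}$. With $A=\One$ and $B=\overline{P_k}P_{k'}^\top$, this gives
\[
\braket{\Phi_k}{\Phi_{k'}} = \tfrac1d \tr{\overline{P_k}P_{k'}^\top} = \tfrac1d\overline{\tr{P_k^\dagger P_{k'}}} = \delta_{k,k'}.
\]
Since there are $d^2$ such vectors, they form an orthonormal basis of $\bC^d\otimes\bC^d$.

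\textbf{The identity.} Both sides are linear in $\ket u$, so it suffices to check it for $\ket u = \ket{j}_\rX\ket{e}_\rE$, and the $\rE$ factor just passes through. It therefore remains to show
\[
\ket{j}_\rX\ket{\Phi}_{\rY\rZ} = \frac1d\sum_k \ket{\Phi_k}_{\rX\rY}\otimes P_k^\dagger\ket{j}_\rZ .
\]
Expanding $\ket{j}_\rX\ket{\Phi}_{\rY\rZ}$ in the orthonormal basis $\{\ket{\Phi_k}_{\rX\rY}\}$, we get
\[
\ket{j}_\rX\ket{\Phi}_{\rY\rZ}=\sum_k \ket{\Phi_k}_{\rX\rY}\otimes \bigl(\bra{\Phi_k}_{\rX\rY}\otimes\One_\rZ\bigr)\ket{j}_\rX\ket{\Phi}_{\rY\rZ}.
\]
We compute each coefficient vector in $\rZ$ by writing $\bra{\Phi_k}=\frac1{\sqrt d}\sum_i\bra{i}\otimes\bra{i}\overline{P_k}$ and $\ket{\Phi}=\frac1{\sqrt d}\sum_l\ket{ll}$. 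This gives
\[
\frac1d\sum_{i,l}\braket{i}{j}\,\bra{i}\overline{P_k}\ket{l}\,\ket{l}=\frac1d\sum_l \overline{(P_k)_{jl}}\ket l=\frac1d P_k^\dagger\ket j .
\]
This is exactly the claimed coefficient.

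This is the ``transpose trick'' computation, and it is the only real step. The care needed is in tracking the conventions, namely whether $P_k^\top$ or $P_k$ appears and whether $P_k^\dagger$ or $\overline{P_k}$ appears, so that the correction needed is $P_k$ on $\rZ$. The case $d=1$ is trivial.
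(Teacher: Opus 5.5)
Your proof is correct and follows essentially the same route as the paper's: Weyl--Heisenberg operators, orthonormality of the Bell basis via the trace identity, a direct computation of the $\rZ$-coefficient, and linearity. The only differences are harmless choices: you use product vectors $\ket{j}\ket{e}$ where the paper uses a Schmidt decomposition, and you set $P_k = X^aZ^b$ where the paper sets $P_k = (X^aZ^b)^\top$. There is one small slip: $\tr{\overline{P_k}P_{k'}^\top}$ equals $\tr{P_k^\dagger P_{k'}}$ itself rather than its complex conjugate, but this changes nothing since the value $d\,\delta_{k,k'}$ is real.
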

\begin{proof}
    Let $X$ and $Z$ be the operators such that for all $j\in [d]$,
    \[
    X \ket{ j } = \ket{ j + 1 \mod d }
    \quad\text{and}\quad
    Z \ket{ j } = \omega^j \ket{ j }
    \quad\text{where}\quad
    \omega = \exp(2i\pi/d) \ .
    \]
    For $a,b\in [d]$, we define $W_{a,b} = X^a Z^b$ and $\ket{ \Phi_{a,b} } = (\Id \otimes W_{a,b}) \ket{ \Phi }$. Since
    \[
        \braket{ \Phi_{a,b} }{ \Phi_{c,d} }
        = \braUket{ \Phi }{ \Id \otimes W_{a,b}^\dagger W_{c,d} }{ \Phi }
        = (1/d)\tr{ W_{a,b}^\dagger W_{c,d} }
        = \delta_{a,c} \delta_{b,d} \ ,
    \]
    the $\ket{ \Phi_{a,b} }$ are an orthonormal basis of $\bC^d \otimes \bC^d$.
    
    Let us now compute the coefficient of $\ket{ v }\otimes \ket{\Phi}$ in the basis $\set{ \ket{\Phi_{a,b}} \otimes \ket{c}: a, b, c\in [d]}$ for some state $\ket{v} \in \bC^d$. Fix $a,b\in [d]$, then
    \begin{align*}
        \braket{ \Phi_{a,b} \otimes \Id }{ v \otimes \Phi }
        &= \braUket{ \Phi \otimes \Id }{ \Id \otimes W_{a,b}^\dagger \otimes \Id }{ v \otimes \Phi }
        = \braket{ \Phi \otimes \Id }{ v \otimes ( W_{a,b}^\dagger \otimes \Id ) \Phi }
    \end{align*}
    Using that $M \otimes \Id \ket{ \Phi } = \Id \otimes M^\top \ket{ \Phi }$ for all $M \in \Mat_d(\bC)$ and that $(W_{a,b}^\dagger)^\top = \overline{W}_{a,b}$, we can rewrite
    \[
    \braket{ \Phi_{a,b} \otimes \Id }{ v \otimes \Phi }
    = \braUket{ \Phi \otimes \Id }{\Id \otimes \Id \otimes \overline{W}_{a,b}}{ v \otimes \Phi }
    = \overline{W}_{a,b} \braket{ \Phi \otimes \Id }{ v \otimes \Phi }
    = (1/d) \overline{W}_{a,b} \ket{ v }_{\rZ}
    \]
    where we use that $\overline{W}_{a,b}$ acts only on $\rZ$ here.
    Hence, we have that
    \begin{equation}
        \label{eq:teleportation}
        \ket{ v }_{\rX} \otimes \ket{ \Phi }_{\rY \rZ}
        = \frac{1}{d} \sum_{a,b \in [d]} \ket{ \Phi_{a,b} }_{\rX, \rY} \otimes \overline{W}_{a,b} \ket{ v }_{\rZ} \ .
    \end{equation}
    Consider now the Schmidt decomposition of $\ket{u}$ in $\rX \otimes \rE$,
    \[
    \ket{u} = \sum_{i=1}^r \ket{ v_i }_{\rX} \otimes \ket{ w_i }_{\rE} \ .
    \]
    Applying \cref{eq:teleportation} for each $v_i$ and using linearity,
    \begin{align*}
    \ket{ u }_{\rX, \rE} \otimes \ket{ \Phi }_{\rY,\rZ}
    &= \frac{1}{d} \sum_{i=1}^r \sum_{a,b \in [d]} \ket{ \Phi_{a,b} }_{\rX, \rY} \otimes \overline{W}_{a,b} \ket{ v_i }_{\rZ}  \otimes \ket{ w_i }_{\rE} \\
    &= \frac{1}{d} \sum_{a,b \in [d]} \ket{ \Phi_{a,b} }_{\rX, \rY} \otimes (\overline{W}_{a,b} \otimes \Id_{\rE}) \paren*{ \sum_{i=1}^r \ket{ v_i }_{\rZ}  \otimes \ket{ w_i }_{\rE} } \\
    &= \frac{1}{d} \sum_{a,b \in [d]} \ket{ \Phi_{a,b} }_{\rX, \rY} \otimes (\overline{W}_{a,b} \otimes \Id_{\rE}) \ket{ u }_{\rZ,\rE} \ .
    \end{align*}
    And thus if one measures $\rX \otimes \rY$ in the basis $\set{ \ket{\Phi_{a,b}} }$ and obtains some outcome $k=(a,b)$, after applying $\overline{W}_{a,b}^\dagger$ to $\rZ$, we obtain the state $\ket{u}_{\rZ,\rE}$ which is the same as relabeling $\rX$ as $\rZ$. To recover the notation in the statement, use $P_k := W_{a,b}^\top$.
\end{proof}

\subsection{Chain teleportation}
\label{app:chain-teleportation}

Here we prove \Cref{lem:chain-teleport}, which states that if teleportation is performed in parallel in a chain of nodes, the overall effect is a relabelling of the registers (see \Cref{fig:chain-teleportation}).

\begin{figure}[ht!]
    \centering
    \includegraphics[height=0.5\linewidth]{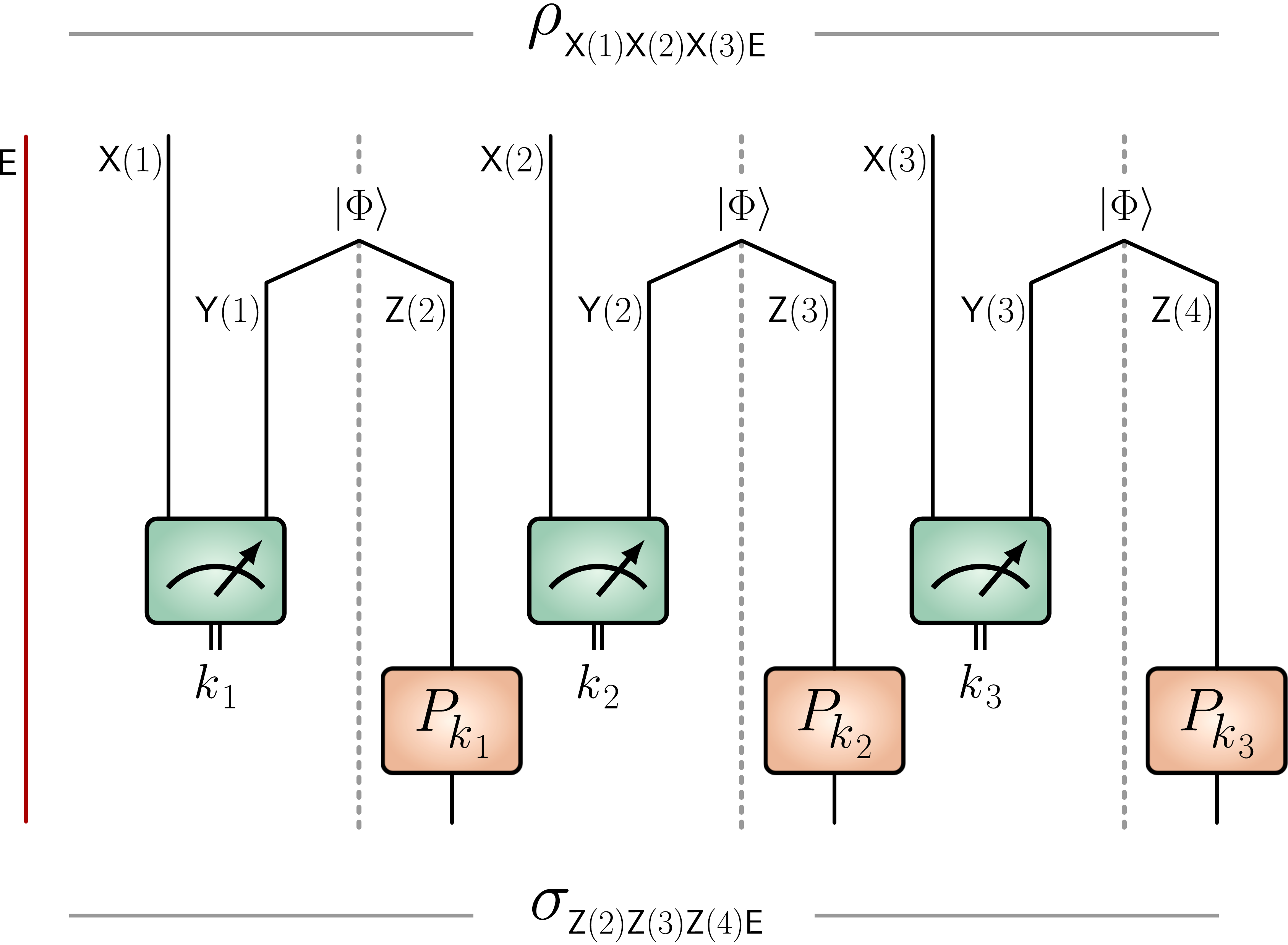}
    \caption{Parallel teleportation for a chain of $n = 4$ nodes. The input state $\rho_{\rX(1)\rX(2)\rX(3)\rE}$ is equal to the output state $\sigma_{\rZ(2)\rZ(3)\rZ(4)\rE}$ up to a relabelling of the registers.}
    \label{fig:chain-teleportation}
\end{figure}

\LemChainTeleport*

\begin{proof}
Let the initial state be
\[
\sigma = \rho_{\rX(1)\rX(2)\cdots\rX(n-1)\rE} \otimes \bigotimes_{p=1}^{n-1} \ketbra{\Phi}{\Phi}_{\rY(p)\rZ(p+1)} ,
\]
where we consider $\rZ(1),\rX(n),\rY(n)$ as part of the environment register $\rE$.

Decompose $\rho$ as
\[
    \rho = \sum_{i=1}^r \lambda_i \ketbra{u_i}{u_i}_{\rX(1)\rX(2)\cdots\rX(n-1)\rE}
\]
(cf.~\Cref{eq:decompose-density-matrix}), where $\lambda_1, \ldots, \lambda_r \in \bR_{> 0}$ and $\ket{u_1}, \ldots, \ket{u_r} \in \rX(1)\rX(2)\cdots\rX(n-1)\rE$.
Let us apply the teleportation identity (\Cref{eq:teleport}) to the $\rX(1)\rY(1)\rZ(2)$ registers of $\sigma$. This yields
\[
    \sigma = \frac{1}{d^2} \sum_{i=1}^r \lambda_i \sum_{k, \ell \in [d]^2} \ketbra{\Phi_k}{\Phi_\ell}_{\rX(1)\rY(1)}\otimes (\overline P_k\otimes \Id) \ketbra{u_i^{(1)}}{u_i^{(1)}} (\overline P_\ell\otimes \Id)^\dagger \otimes
    \bigotimes_{p=2}^{n-1} \ketbra{\Phi}{\Phi}_{\rY(p)\rZ(p+1)} ,
\]
where $\ket{u_i^{(1)}} := \ket{u_i^{(1)}}_{ \rZ(2) \rX(2) \rX(3) \ldots \rX(n-1) \rE }$ is obtained from $\ket{u_i}$ by relabelling $\rX(1)$ to $\rZ(2)$, the unitary $\overline P_k$ acts on $\rZ(2)$, and the identity on $\rX(2)\rX(3)\cdots\rX(n-1)\rE$.

Let us condition on the teleportation measurement result being $\kappa_1$.
Then, after applying the correction $P_{\kappa_1}^{\top}$ on $\rZ(2)$, where we use the fact that $P_{\kappa_1}^\top \overline P_{\kappa_1}=\Id$, and tracing out the registers $\rX(1)\rY(1)$ that were measured for the teleportation, we get
\[
\sigma^{(1)} = \sum_{i=1}^r \lambda_i \ketbra{u_i^{(1)}}{u_i^{(1)}}_{ \rZ(2) \rX(2) \rX(3) \ldots \rX(n-1) \rE } \otimes \bigotimes_{p=2}^{n-1} \ketbra{\Phi}{\Phi}_{\rY(p)\rZ(p+1)} .
\]
Therefore, the result of teleportation on the registers $\rX(1)\rY(1)\rZ(2)$ of $\sigma$ is a relabelling of $\rX(1)$ to $\rZ(2)$.

Repeating the same argument for $\rX(p)\rY(p)\rZ(p+1)$ with $p \in \{2, \ldots, n-1\}$ and conditioning on the measurement outcomes $(\kappa_2, \ldots, \kappa_{n-1})$ leads to
\[
\sigma^{(n-1)} = \sum_{i=1}^r \lambda_i \ketbra{u_i^{(n-1)}}{u_i^{(n-1)}}_{\rZ(2)\rZ(3)\cdots\rZ(n)\rE} ,
\]
where $\ket{u_i^{(n-1)}}$ is obtained from $\ket{u_i}$ by relabelling $\rX(p)$ as $\rZ(p+1)$ for every $p \in [n-1]$.

Hence, the resulting state on ${\rZ(2)\rZ(3)\cdots\rZ(n)\rE}$ is equal to the initial state on ${\rX(1)\rX(2)\cdots\rX(n-1)\rE}$ up to the relabelling.\end{proof}
 
\end{document}